\tikzset{snake it/.style={decorate, decoration=snake}}
\tikzset{arc/.style = {->,> = latex', line width=.75pt}}
\newcommand{\polylog}{{\rm polylog}}
\newtheorem{lemma}{Lemma}
\newtheorem{proposition}{Proposition}
\newtheorem{theorem}{Theorem}
\newtheorem{corollary}{Corollary}
\newtheorem{myclaim}{Property}
\newenvironment{proofof}[1]{\medskip\noindent\emph{Proof of #1. }\ignorespaces}{\hfill\qed\medskip\par\noindent\ignorespacesafterend}
\newcommand{\qedclaim}{\hfill $\diamond$ \medskip}
\newenvironment{proofclaim}{\noindent{\em Proof.}}{\qedclaim}
\begin{document}

\title{Faster approximation algorithms for computing shortest cycles on weighted graphs}
\author[1,2]{Guillaume Ducoffe}
\affil[1]{\small National Institute for Research and Development in Informatics, Romania}
\affil[2]{\small University of Bucharest, Faculty of Mathematics and Computer Science, Romania}
\date{}

\maketitle

\begin{abstract}
Given an $n$-vertex $m$-edge graph $G$ with non negative edge-weights, a {\em shortest cycle} of $G$ is one minimizing the sum of the weights on its edges.
The {\em girth} of $G$ is the weight of such a shortest cycle.
We obtain several new approximation algorithms for computing the girth of weighted graphs:
\begin{itemize}
\item
For any graph $G$ with polynomially bounded integer weights, we present a deterministic algorithm that computes, in $\tilde{\cal O}(n^{5/3}+m)$-time, a cycle of weight at most twice the girth of $G$.
This matches both the approximation factor and -- almost -- the running time of the best known subquadratic-time approximation algorithm for the girth of {\em unweighted} graphs (Roditty and Vassilevska Williams, {\it SODA'12}).
Our approach combines some new insights on the previous approximation algorithms for this problem (Lingas and Lundell, {\it IPL'09}; Roditty and Tov, {\it TALG'13}) with {\sc Hitting Set} based methods that are used for approximate distance oracles and date back from (Thorup and Zwick, {\it JACM'05}).
\item
Then, we turn our algorithm into a deterministic $(2+\varepsilon)$-approximation for graphs with arbitrary non negative edge-weights, at the price of a slightly worse running-time in $\tilde{\cal O}(n^{5/3}\polylog{(1/\varepsilon)}+m)$.
For that we introduce a novel polynomial-factor approximation of the girth, that makes more amenable the passing from the graphs with bounded integer edge-weights to the general case and is of independent interest.
\item
Finally, if we insist in removing the dependency in the number $m$ of edges, we can transform our algorithms into an $\tilde{\cal O}(n^{5/3})$-time {\em randomized} $4$-approximation for the graphs with non negative edge-weights -- assuming the adjacency lists are sorted.
Combined with the aforementioned {\sc Hitting Set} based methods, this algorithm can be derandomized, thereby yielding an $\tilde{\cal O}(n^{5/3})$-time deterministic $4$-approximation for the graphs with polynomially bounded integer weights, and an $\tilde{\cal O}(n^{5/3}\polylog{(1/\varepsilon)})$-time deterministic $(4+\varepsilon)$-approximation for the graphs with non negative edge-weights.
\end{itemize}
To the best of our knowledge, these are the first known {\em subquadratic-time} approximation algorithms for computing the girth of weighted graphs. 
\end{abstract}

\section{Introduction}\label{sec:intro}

The exciting program of ``Hardness in P'' aims at proving (under plausible complexity theoretic conjectures) the {\em exact} time-complexity of fundamental, polynomial-time solvable problems in computer science.
As for {\em graph} problems, most papers in this area either deal with the dense case ($m = \Omega(n^2)$ edges)~\cite{AGV14,RoV11,WiV10}, or the sparse case ($m = {\cal O}(n)$ edges)~\cite{AVW16,ABHV+18,BRSV+18}.
-- A finer-grained classification of such problems that depends on both $n$ and $m$ has been started only recently~\cite{AgR18,LVW18} --.
In this paper, we consider the {\sc Girth} problem on edge-weighted undirected graphs, for which almost all what is known in terms of finer-grained complexity only holds for the dense case.
We recall that the girth of a given graph $G$ is the minimum weight of a cycle in $G$ --- with the weight of a cycle being defined as the sum of the weights on its edges (see Sec.~\ref{sec:prelim} for any undefined terminology in this introduction).
Orlin and Sede\~{n}o-Noda showed that this parameter can be computed in ${\cal O}(nm)$-time~\cite{OrS17}.
For dense graphs that is in ${\cal O}(n^3)$, and Vassilevska Williams and Williams~\cite{WiV10} proved a bunch of combinatorial subcubic equivalences between {\sc Girth} and other path and matrix problems.
In particular, for every $\varepsilon > 0$, there cannot exist any {\em combinatorial} $(4/3 - \varepsilon)$-approximation algorithm for {\sc Girth} that runs in truly subcubic time unless there exists a subcubic combinatorial algorithm for multiplying two boolean matrices.
Roditty and Tov completed this above hardness result with an $\tilde{\cal O}(n^2/\varepsilon)$-time $(4/3 + \varepsilon)$-approximation algorithm~\cite{RoT13}, thereby essentially completing the picture for combinatorial algorithms on dense graphs.

However, the story does not end here for at least two reasons.
The first one is that, as the graphs considered get sparser, the complexity for computing their girth falls down to ${\cal O}(n^2)$.
In fact, when the edge-weights are integers bounded by some constant $M$, there is a non combinatorial algorithm for computing the girth of {\em any} $n$-vertex graph $G$ in time $\tilde{\cal O}(Mn^{\omega})$ where $\omega$ stands for the the exponent of square matrix multiplication over a ring~\cite{RoV11}.
It is widely believed that $\omega = 2$~\cite{Rob05}, and if true, that would imply we can compute the {\em exact} value of the girth in quasi {\em quadratic} time --- at least when edge-weights are bounded. 
So far, all the {\em approximation} algorithms for {\sc Girth} on weighted graphs run in $\tilde{\Omega}(n^2)$-time.
This leads us to the following, natural research question:
\begin{center}\it
Does there exist a subquadratic approximation algorithm for {\sc Girth} on weighted graphs?
\end{center}
In this paper, we answer to this above question in the affirmative.

    \subsection{Our contributions}\label{sec:contributions}

We present new approximation algorithms for the girth of graphs with non negative real edge-weights.
These are the first algorithms to break the quadratic barrier for this problem -- at the price of a slightly worse approximation factor compared to the state of the art~\cite{RoT13}.
Of particular importance is the subcase of graphs with positive integer weights, as we can reduce the general case to it.
Our first result is obtained for the graphs with bounded integer edge-weights.

\begin{restatable}{theorem}{mainBounded}
\label{thm:main-bounded-weight}
For every $G=(V,E,w)$ with edge-weights in $\{1, \ldots, M\}$, we can compute a deterministic $2$-approximation for {\sc Girth} in time $\tilde{\cal O}(n^{5/3}\polylog{M}+m)$.
\end{restatable}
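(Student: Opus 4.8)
The engine is a \emph{detour lemma}, in the spirit of the shortest-path based algorithms of Lingas--Lundell and Roditty--Tov, showing that a single well-placed Dijkstra call already certifies a $2$-approximation. Let $C$ be a minimum-weight cycle, so that the girth equals $g := w(C)$, and let $v$ be a vertex of $C$. For an arbitrary vertex $u$, run Dijkstra from $u$, let $T_u$ be the resulting shortest-path tree, and note that $C$ must contain an edge $e=xy \notin T_u$. Concatenating the $T_u$-path from $u$ to $x$, the edge $e$, and the $T_u$-path from $y$ to $u$ gives a closed walk of weight $d_G(u,x) + w(e) + d_G(u,y)$, hence a cycle of at most that weight. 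Now $d_G(u,x) \le d_G(u,v) + d_C(v,x)$ and $d_G(u,y) \le d_G(u,v) + d_C(v,y)$, where $d_C$ is distance along $C$; and since $x,y$ are consecutive on $C$, the two $C$-arcs from $v$ to $x$ and from $v$ to $y$ that avoid $e$ have total weight $g - w(e)$, so $d_C(v,x) + d_C(v,y) \le g - w(e)$. The detected cycle therefore has weight at most $2\,d_G(u,v) + g$. Hence Dijkstra from \emph{any} vertex $u$ within distance $g/2$ of some vertex of $C$, followed by a scan of the non-tree edges, returns a cycle of weight $\le 2g$ -- and of weight exactly $g$ when $u$ itself lies on $C$. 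The point of allowing $u$ to be merely close to $C$, rather than on it, is that this is what a small hitting set can guarantee.

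On top of this, I would guess $g$ within a factor two by a dyadic search over $\{1,\dots,nM\}$ -- the source of the $\polylog M$ factor -- and, in the round with current guess $\hat g$, restrict attention to the subgraph of edges of weight $\le \hat g$, which still contains $C$ whenever $\hat g \ge g$. Fix a threshold $b$ on the number of vertices explored from a source, to be balanced later at roughly $b \approx n^{1/3}$. First, from every vertex run a Dijkstra truncated once its $b$ nearest vertices are settled and scan the explored subgraph for a cycle: if some vertex of $C$ has fewer than $b$ vertices within distance $g/2$ of it, then all of $V(C)$ lies among that vertex's $b$ nearest vertices, the non-tree edge of $C$ is seen, and we output a cycle of weight $\le g$. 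Otherwise every vertex of $C$ has at least $b$ vertices within distance $g/2$; in that case, following the \textsc{Hitting Set} machinery behind the Thorup--Zwick approximate distance oracles and their derandomizations, precompute a deterministic set $S$ of size $\tilde{\cal O}(n/b)$ that meets the $b$ nearest vertices of every vertex, so that some $u \in S$ lies within distance $g/2$ of a vertex of $C$; then run Dijkstra from each $u \in S$ and apply the detour lemma. The output is the lightest cycle found over all guesses and all sources.

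The delicate part, and the one I expect to be the main obstacle, is the running time: the target $\tilde{\cal O}(n^{5/3}\,\polylog M + m)$ only tolerates an \emph{additive} dependence on $m$, so every super-linear step must be confined to sparse instances. I would push the case analysis along the number of edges of the cycles involved. When every cycle of weight close to $g$ has many edges, $G$ has large unweighted girth and hence only $\tilde{\cal O}(n)$ edges by the Moore / Bondy--Simonovits bound, so both the $\tilde{\cal O}(n/b)$ Dijkstra calls from $S$ and the construction of $S$ run within $\tilde{\cal O}(n^{2/3}\cdot n) = \tilde{\cal O}(n^{5/3})$ there. When instead some cycle of weight $\le 2g$ has few edges, it should be caught by a bounded-hop exploration reusing ideas of the unweighted-girth algorithm of Roditty and Vassilevska Williams, the factor-two slack in the theorem absorbing what is lost by working combinatorially with a bounded number of hops and bounded-weight subgraphs. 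Reconciling these two regimes -- interleaving the deterministic hitting-set construction, the weight buckets, and the short-cycle detection without anyone exceeding the budget -- is where the bulk of the technical work lies, together with the routine but careful accounting showing that the $n$ truncated Dijkstra calls total $\tilde{\cal O}(n^{5/3}+m)$ once each explored edge is charged to the ball that contains it, and that $b \approx n^{1/3}$ is the value equalizing the hitting-set branch (of size $\tilde{\cal O}(n^{2/3})$) with the $n$ explorations (of cost $\tilde{\cal O}(b^2)$ each).
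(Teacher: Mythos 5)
Your approximation analysis is sound and closely parallels the paper's: your ``detour lemma'' (a source $u$ within distance $g/2$ of a minimum cycle certifies a cycle of weight $\le 2d_G(u,v)+g$) plays the role of the paper's closeness property (Corollary~\ref{cor:stop-hbd-2}), and your dichotomy between the small-ball case and the hitting-set case is exactly the paper's case analysis, with the same deterministic hitting-set machinery and the same $b\approx n^{1/3}$ balance. The genuine gap is in the running time, and it is precisely the part you flag as ``the main obstacle''. Your per-source procedure is a full Dijkstra plus a scan of all non-tree edges, i.e.\ $\Theta(m)$ per source, so the $\tilde{\cal O}(n^{2/3})$ sources in $S$ cost $\tilde{\cal O}(n^{2/3}m)$ -- up to $n^{8/3}$ on dense graphs -- and the fix you propose does not work: from ``every cycle of weight close to $g$ has many edges'' you cannot conclude that the (weight-restricted) graph has large \emph{unweighted} girth, so Bondy--Simonovits gives no sparsity. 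Concretely, take a light cycle on $n/2$ vertices of total weight $g$ together with a dense block on $n/2$ vertices whose edges all have weight exactly $g$: every edge has weight $\le \hat g$, every cycle of weight $\le 2g$ has $\Theta(n)$ edges, yet $m=\Theta(n^2)$ and the unweighted girth is $3$; neither branch of your dichotomy applies, and your $|S|$ Dijkstras blow the budget.

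What is missing is an $m$-independent per-source routine, and this is the paper's key ingredient: instead of full Dijkstra, it runs Lingas--Lundell's threshold-controlled search {\tt HBD}$(G,s,t)$, which only relaxes edges $e=\{u,v\}$ with $d(s,u)+w_e\le t$ and halts at the first cycle detected; since every successful relaxation either discovers a new vertex or stops the run, each call costs $\tilde{\cal O}(n)$ regardless of $m$ (given sorted adjacency lists). A dichotomic search over $t\in[3,Mn]$ (the source of the $\polylog M$) finds the minimal detection threshold $t_s$, and the new analysis (Lemma~\ref{lem:non-detect-cond} and Corollary~\ref{cor:stop-hbd-2}) shows that for the hitting-set vertex $s$ nearest to a cycle vertex $v$ with $V(C_0)\not\subseteq B_S(v)$ one has $t_s\le w(C_0)$, whence the returned cycle has weight $\le 2t_s\le 2w(C_0)$ by Lemma~\ref{lem:stop-output}; the case $V(C_0)\subseteq B_S(v)$ is handled by running the same procedure inside the $O(n^{1/3})$-vertex ball subgraph, built in $\tilde{\cal O}(n^{2/3})$ per vertex via Lemma~\ref{lem:comute-subgraph} (your ``charge each explored edge to its ball'' accounting is also off, since an edge may lie in $\Theta(n)$ balls; the $\tilde{\cal O}(b^2)$-per-ball bound is what is actually used). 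With this replacement the total is $\tilde{\cal O}(n^{5/3}\polylog M)$ plus a single $\tilde{\cal O}(m)$ preprocessing of the adjacency lists, and no sparsification step or unweighted-girth detour is needed at all.
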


Our starting point for Theorem~\ref{thm:main-bounded-weight} is a previous $2$-approximation algorithm from Lingas and Lundell~\cite{LiL09}, that runs in {\em quadratic} time.
Specifically, these two authors introduced an $\tilde{O}(n\log{M})$-time procedure that takes as entry a specified vertex of the graph and needs to be applied to {\em every} vertex in order to obtain the desired $2$-approximation of the girth.
Inspired by the techniques used for approximate distance oracles~\cite{ThZ05} we informally modify their algorithm as follows. We only apply their procedure to the vertices in a {\em random} subset $S$: where each vertex is present with equal probability $n^{-1/3}$ (we can derandomize our approach by using known techniques from the litterature~\cite{RTZ05}).
Furthermore, for the vertices not in $S$, we rather apply a modified version of their procedure that is restricted to a small subgraph -- induced by some ball of expected size ${\cal O}(n^{1/3})$.
A careful analysis shows this is a $2$-approximation.
%

The reason why this above approach works is that, when we run the procedure of Lingas and Lundell at some arbitrary vertex $s$, it will always detect a short cycle if there is one passing {\em close} to $s$ (but not necessarily passing through $s$ itself).
This nice property has been noticed and exploited for related algorithms on {\em unweighted} graphs~\cite{LiL09}.
However, we think we are the first to prove such a property in the weighted case.
We note that one of the two algorithms proposed by Roditty and Tov in~\cite{RoT13} also satisfies such a property.
We did not find a way to exploit their algorithm in order to improve our approximation factor.

\medskip
By using a considerably more complex machinery, we finally generalize Theorem~\ref{thm:main-bounded-weight} to the graphs with {\em arbitrary} non negative edge-weights --- up to an additional polylogarithmic dependency in some error parameter $\varepsilon$ in the complexity of the algorithm.
This is the second main result of this paper:

\begin{restatable}{theorem}{mainUnbounded}
\label{thm:main-unbounded-weight}
For every $\varepsilon > 0$ and $G=(V,E,w)$ with non negative edge-weights, we can compute a deterministic $(2+\varepsilon)$-approximation for {\sc Girth} in time $\tilde{\cal O}(n^{5/3}\polylog{1/\varepsilon}+m)$.
\end{restatable}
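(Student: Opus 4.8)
The plan is to reduce to Theorem~\ref{thm:main-bounded-weight} by a scale-and-round argument; the one delicate point is that the ratio between the largest and the smallest edge-weight of $G$ can be arbitrary, so before it is safe to quantize the weights one first has to learn, up to a polynomial factor, at which scale the girth lives. If $G$ is a forest there is no cycle and nothing to do, so assume $G$ has a cycle.

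First I would compute a coarse estimate. Sort the edges by non-decreasing weight and run Kruskal's algorithm, stopping at the first edge $e_j=(u,v)$ that creates a cycle; recover its fundamental cycle $C_0$ by a single graph search inside the tree currently containing $u$, and set $g:=w(e_j)$. Because $e_1,\dots,e_{j-1}$ form a forest, every cycle of $G$ uses one of $e_j,\dots,e_m$ and hence has weight at least $g$; and $C_0\subseteq\{e_1,\dots,e_j\}$ has at most $j\le n$ edges, all of weight at most $g$, so $g\le\mathrm{girth}(G)\le w(C_0)\le n\,g$, everything computed in $\tilde{\cal O}(m)$ time. If $g=0$ then $C_0$ already certifies $\mathrm{girth}(G)=0$; otherwise $g>0$ and I proceed to the reduction.

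Next I would delete from $G$ every edge of weight $>n\,g$ — by the bound above such an edge lies on no minimum-weight cycle, so this does not change the girth — obtaining a subgraph $G'$, and put $w'(e):=1+\lceil\, w(e)\,n/(\varepsilon g)\,\rceil$ on $G'$. Surviving edges have $w(e)\le n g$, hence $1\le w'(e)\le 1+\lceil n^2/\varepsilon\rceil=:M={\cal O}(n^2/\varepsilon)$, and Theorem~\ref{thm:main-bounded-weight} applied to $(G',w')$ yields, in time $\tilde{\cal O}(n^{5/3}\polylog M+m)=\tilde{\cal O}(n^{5/3}\polylog(1/\varepsilon)+m)$, a cycle $C^{\star}$ with $w'(C^{\star})\le 2\,\mathrm{girth}(G',w')$. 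For the analysis, every cycle $C$ of $G'$ has at most $n$ edges, whence $\frac{n}{\varepsilon g}\,w(C)\le w'(C)\le\frac{n}{\varepsilon g}\,w(C)+2n$; the left inequality gives $\mathrm{girth}(G',w')\ge\frac{n}{\varepsilon g}\,\mathrm{girth}(G)$ and the right one, applied to a minimum-$w$-weight cycle of $G'$, gives $\mathrm{girth}(G',w')\le\frac{n}{\varepsilon g}\,\mathrm{girth}(G)+2n$, so that, using $g\le\mathrm{girth}(G)$,
\[
 w(C^{\star})\ \le\ \tfrac{\varepsilon g}{n}\,w'(C^{\star})\ \le\ \tfrac{2\varepsilon g}{n}\Big(\tfrac{n}{\varepsilon g}\,\mathrm{girth}(G)+2n\Big)\ =\ 2\,\mathrm{girth}(G)+4\varepsilon g\ \le\ (2+4\varepsilon)\,\mathrm{girth}(G),
\]
while trivially $w(C^{\star})\ge\mathrm{girth}(G)$. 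Running the whole reduction with $\varepsilon/4$ in place of $\varepsilon$ and reporting $w(C^{\star})$ then proves the theorem.

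I expect the coarse estimate to be the only genuinely new ingredient: its role is precisely to turn ``the weights may be astronomically small or large'' into ``after rescaling by $g$ the weights sit in a polynomial range'', which is what makes the quantization affordable within a $\polylog(1/\varepsilon)$ overhead. Everything else should be routine — the degenerate cases (an acyclic $G$; zero-weight edges, including zero-weight cycles) being handled by the $g=0$ shortcut and by the ``$+1$'' in the definition of $w'$, and the only real care being to check in the error analysis that the $\Theta(n)$ additive rounding error incurred on a cycle is always dominated by a genuine multiple of $\mathrm{girth}(G)$, which the bound $g\le\mathrm{girth}(G)$ guarantees.
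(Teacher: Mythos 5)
Your proof is correct, but it takes a genuinely different route from the paper. The paper never invokes Theorem~\ref{thm:main-bounded-weight} as a black box for this step: it first eliminates zero-weight edges by a contraction argument (Lemma~\ref{remove-zero-weights}), then builds an $\tilde{\cal O}(n^{2/3})$-factor estimate of the girth through a fairly elaborate five-step construction on the hitting-set shortest-path trees (Proposition~\ref{prop:linear-approx}), and finally re-runs the {\tt HBD} threshold machinery directly on the real-weighted graph over a geometric grid of ${\cal O}(\log n/\varepsilon)$ thresholds, arguing correctness via Corollaries~\ref{cor:stop-hbd-1} and~\ref{cor:stop-hbd-2}. You instead supply a much cheaper scale estimate --- the weight $g$ of Kruskal's first cycle-closing edge, which correctly satisfies $g\le \mathrm{girth}(G)\le ng$ since the earlier edges form a forest and the fundamental cycle has at most $n$ edges of weight at most $g$ --- then prune edges heavier than $ng$, rescale and round to integers in $\{1,\ldots,{\cal O}(n^2/\varepsilon)\}$, and apply Theorem~\ref{thm:main-bounded-weight} verbatim; the ``$+1$'' in the rounding and the $g=0$ shortcut also absorb zero-weight edges, so you bypass the contraction lemma entirely, and your error analysis ($w(C^\star)\le 2\,\mathrm{girth}(G)+4\varepsilon g\le(2+4\varepsilon)\,\mathrm{girth}(G)$) is sound. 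Your approach is essentially the paper's own Lemma~\ref{lem:remove-constant}/Corollary~\ref{cor:integer-weights} scaling idea completed with the one ingredient those statements lack, namely a polynomial-range localization of the girth computable in $\tilde{\cal O}(m)$ time, and it is shorter and fully modular (any $\alpha$-approximation for bounded integer weights immediately yields $\alpha+\varepsilon$ in general). What the paper's heavier route buys is (i) the stand-alone polynomial-factor approximation of Proposition~\ref{prop:linear-approx}, advertised as of independent interest, and (ii) techniques that survive in Section~\ref{sec:remove-edges}, where the ${\cal O}(m)$-time edge scan/sort underlying your Kruskal estimate is no longer affordable; also note the cosmetic point that your pruned graph $G'$ may be disconnected, so one should run Theorem~\ref{thm:main-bounded-weight} per component (the $\sum_i n_i^{5/3}\le n^{5/3}$ bound keeps the running time), since the paper's convention is that input graphs are connected.
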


Theorem~\ref{thm:main-unbounded-weight} implies Theorem~\ref{thm:main-bounded-weight} for some $\varepsilon = {\cal O}(1/(Mn))$ sufficiently small.
Nevertheless, we chose to present Theorem~\ref{thm:main-bounded-weight} first as the passing from bounded integer-weights to arbitrary weights is not straightforward.
In~\cite{RoT13}, Roditty and Tov introduced a nice technique -- that we partly reuse in this paper -- in order to do just that.
However, we face several new difficulties, not encountered in~\cite{RoT13}, due to the need to perform all the intermediate operations in {\em subquadratic} time.
This leads us to an interesting side contribution of this work, namely, the design of an $n^{{\cal O}(1)}$-approximation algorithm for the girth.

\medskip
Our algorithms are subquadratic in the size of the graph, but they may be quadratic in its order $n$ if there are $m = \Theta(n^2)$ edges.
By a celebrated result of Bondy and Simonovits, any unweighted graph with $m \geq 100\ell n^{1+\frac 1 {\ell}}$ edges has girth at most $2\ell$~\cite{BoS74}, and so, we can always output a constant upper-bound on the girth of moderately dense graphs. 
It implies that the dependency on $m$ can always be removed in the running-time of approximation algorithms for the girth of {\em unweighted} graphs. 
However, elementary arguments show this additional dependency in the number of edges to be necessary for weighted graphs, at least if we do not assume any particular order on the adjacency lists.
For instance, given $G=(V,E,w)$ with integer weights, add a fresh new vertex $u \notin V$ and the edges $vu, v \in V$ with (unit) weight $w_{uv} = 1$.
Then, the girth of this new graph $G'$ is exactly $2+w_{\min}$ where $w_{\min}$ denotes the minimum edge-weight in $G$.
Since any constant-factor approximation algorithm for computing $w_{\min}$ essentially requires time $\Omega(m)$, so does any constant-factor approximation algorithm for the girth of weighted graphs.

However, one may ask what happens if we now assume {\em sorted} adjacency lists -- a very natural and common assumption in the field.
As our third main result in the paper, we show that in this case, the dependency on $m$ can be removed:

\begin{restatable}{theorem}{mainWithoutM}
\label{thm:main-without-m}
Let $G=(V,E,w)$ have sorted adjacency lists.
\begin{enumerate}
\item If all edge-weights are in $\{1, \ldots, M\}$ then, we can compute a deterministic $4$-approximation for {\sc Girth} in time $\tilde{\cal O}(n^{5/3}\polylog{M})$.
\item If all edge-weights are non negative then, we can compute a {\em randomized} $4$-approximation for {\sc Girth} in time $\tilde{\cal O}(n^{5/3})$. For every $\varepsilon > 0$, we can also compute a deterministic $(4+\varepsilon)$-approximation for {\sc Girth} in time $\tilde{\cal O}(n^{5/3}\polylog{1/\varepsilon})$. 
\end{enumerate}
\end{restatable}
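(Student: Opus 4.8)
The plan is to run the algorithms behind Theorem~\ref{thm:main-bounded-weight} and Theorem~\ref{thm:main-unbounded-weight}, but with every part that inspects the whole of $G$ replaced by something cheaper that exploits the sorted order. Concretely: while reading the input, abort the plain scan as soon as more than $c\,n^{5/3}$ edges have been seen; if it never aborts then $m=\tilde{\cal O}(n^{5/3})$ and we invoke Theorem~\ref{thm:main-bounded-weight} (resp.\ Theorem~\ref{thm:main-unbounded-weight}) verbatim. Otherwise we only read, for each vertex $v$, its $\min(\deg(v),d)$ lightest incident edges, where $d=\tilde\Theta(n^{1/3})$; because the list of $v$ is sorted this costs $\tilde{\cal O}(d)$ per vertex, hence $\tilde{\cal O}(nd)=\tilde{\cal O}(n^{4/3})$ overall, and it also tells us which vertices are \emph{heavy} (their list did not run out before the threshold). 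Let $H$ be the subgraph of the read edges; it has $\tilde{\cal O}(n^{4/3})$ edges and maximum degree $\tilde{\cal O}(d)$, so the ball-restricted searches of Theorem~\ref{thm:main-bounded-weight} -- which only need balls of $\tilde{\cal O}(n^{1/3})$ vertices -- run on $H$ within a total of $\tilde{\cal O}(n\cdot n^{1/3}\cdot d)=\tilde{\cal O}(n^{5/3})$ time, and the key point is that none of this work ever touched the heavy part of an adjacency list. On top of $H$ we keep the {\sc Hitting Set}-based sample $S$ of Theorem~\ref{thm:main-bounded-weight}, of size $\tilde{\cal O}(n/d)=\tilde{\cal O}(n^{2/3})$, and we make sure (with the same derandomized construction, or by plain uniform sampling of density $\tilde\Theta(1/d)$) that $S$ hits the set of $d$ lightest neighbors of every heavy vertex. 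From each $s\in S$ we then run the genuine, $m$-independent $\tilde{\cal O}(n\,\polylog M)$-time Lingas--Lundell procedure, for a total of $\tilde{\cal O}(n^{5/3}\,\polylog M)$.

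For the approximation analysis, fix a shortest cycle $C$ of $G$ and put $g:=\mathrm{girth}(G)$. If every edge of $C$ was read, i.e.\ $C\subseteq H$, then the ball-restricted searches of Theorem~\ref{thm:main-bounded-weight} applied inside $H$ already return a cycle of weight at most $2g$. Otherwise some edge $xy$ of $C$ is missing from $H$, which forces $x$ (and $y$) to be heavy and to have $d$ incident $H$-edges all of weight at most $w(xy)\le g$; in particular $S$ contains a vertex $s$ joined to the vertex $x\in C$ by an edge of weight at most $g$, so $C$ passes within $G$-distance $g$ of $s$, and the ``finds a short cycle passing close to $s$'' property of the Lingas--Lundell procedure returns a cycle of weight at most $2(g+g)=4g$. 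Either way the output has weight at most $4g$. Plugging in the deterministic $2$-approximation of Theorem~\ref{thm:main-bounded-weight} gives item~1; plugging in the $(2+\varepsilon)$-approximation of Theorem~\ref{thm:main-unbounded-weight} gives a deterministic $(4+{\cal O}(\varepsilon))$-approximation, hence (after rescaling $\varepsilon$) the deterministic $(4+\varepsilon)$-approximation of item~2; and leaving $S$ as a uniform random sample -- which needs neither the weights to be integers nor the reduction to the bounded case -- gives the clean randomized $4$-approximation, the reduction to bounded integer weights (and its $\varepsilon$ loss) being incurred only when one insists on derandomizing via the {\sc Hitting Set} machinery for arbitrary real weights.

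I expect the crux to be making the approximation guarantee and the $\tilde{\cal O}(n^{5/3})$ running time hold at the same time, i.e.\ proving that discarding all but the $\tilde\Theta(n^{1/3})$ lightest edges per vertex does not cost more than an extra factor $2$. Two points need real care. First, the precise statement of the ``short cycle passing close to $s$'' property established for Theorem~\ref{thm:main-bounded-weight} must be robust enough that an $s$ lying one hop of weight $\le g$ off the shortest cycle -- rather than on it -- still yields a cycle of weight within a constant of $g$, and in the truncated case one must check that ``$xy$ not among the $d$ lightest at $x$'' really produces $d$ \emph{distinct} neighbors of $x$ at weight $\le w(xy)$ (ties in the sorted order, and parallel edges, need to be handled). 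Second, running the ball computations inside $H$ rather than $G$ can only increase distances, so one has to argue that every short cycle relevant to a non-heavy vertex already lies entirely in $H$; this should follow from the definition of ``heavy'', but it is exactly the place where the sorted-list hypothesis is used, since it is what makes the edges we keep be the light ones. A last, more routine obstacle is the arbitrary-real-weights reduction of item~2: re-using the $n^{{\cal O}(1)}$-approximation of the girth from Theorem~\ref{thm:main-unbounded-weight} to rescale and round the weights must itself be done from sorted lists in $\tilde{\cal O}(n^{5/3})$ time, i.e.\ without a full pass over $G$.
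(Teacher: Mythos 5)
Your treatment of item~1 is correct and takes a genuinely different route from the paper. You keep, for every vertex, its $\tilde\Theta(n^{1/3})$ lightest incident edges (per-vertex truncation), force $S$ to hit the light neighborhood of every heavy vertex, run the Theorem~\ref{thm:main-bounded-weight} machinery inside the truncated graph $H$, and from each $s\in S$ run the $m$-independent {\tt HBD} dichotomic search on $G$ itself; a shortest cycle with a discarded edge then passes within distance $g$ of some $s\in S$, and Corollary~\ref{cor:stop-hbd-1} plus Lemma~\ref{lem:stop-output} give the factor $2(g+g)=4g$. The paper instead keeps the $\Theta(n^{3/2})$ \emph{globally} lightest edges, argues via the $C_4$-density theorem (Theorem~\ref{thm:c4-threshold}) that $H$ contains a $4$-cycle of weight at most $4w_{\max}$ while any shortest cycle of $G$ leaving $H$ has weight at least $w_{\max}$, and avoids building the induced subgraphs $G'_v$ by running ${\tt HBD}(G,v,t)$ only for thresholds $t<dist_G(v,S)$; item~1 is then deduced from the real-weight algorithm with $\varepsilon=1/(M(n+1))$. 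Your alternative buys a direct, purely combinatorial argument for bounded integer weights without the density theorem; one small slip is the claim that $H$ has maximum degree $\tilde{\cal O}(n^{1/3})$ (a vertex can be a light neighbor of many others), but only $|E(H)|=\tilde{\cal O}(n^{4/3})$ is needed, so this is harmless.

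The genuine gap is item~2, i.e.\ exactly the part you set aside as ``a more routine obstacle.'' For arbitrary non negative real weights, the {\tt HBD} calls from $S$ need a set of candidate thresholds, and this is where Theorem~\ref{thm:main-bounded-weight} uses integrality and boundedness; leaving $S$ as a uniform random sample removes only the hitting-set derandomization, not the need for thresholds, so your ``clean randomized $4$-approximation'' is not established. To build a small geometric grid one needs an a priori constant- or polynomial-factor estimate of the girth of $G$ (not of $H$), and both Proposition~\ref{prop:linear-approx} and the zero-weight reduction (Lemma~\ref{remove-zero-weights}) read all $m$ edges, which is forbidden here; in your per-vertex truncation the natural computable lower bound (the minimum over heavy vertices of the $d$-th smallest incident weight) can be zero, or arbitrarily far below the girth, so the grid you would binary-search over has no valid starting point. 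This is precisely the missing idea that the paper supplies with Proposition~\ref{prop:unweighted}: run the $(2+\varepsilon)$-algorithm on the globally lightest $\Theta(n^{3/2})$ edges and invoke Theorem~\ref{thm:c4-threshold} (plus the Yuster--Zwick $C_4$-finder for the randomized $4$) to obtain a constant-factor seed $g^*$, after which a constant-ratio grid of size ${\cal O}(1/\varepsilon)$ suffices. Without an argument of this kind, your proposal proves item~1 but not the randomized $4$- nor the deterministic $(4+\varepsilon)$-approximation of item~2.
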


We observe that even assuming sorted adjacency lists, it is not clear whether the algorithm of Theorem~\ref{thm:main-bounded-weight} can be implemented to run in  time $\tilde{\cal O}(n^{5/3}\polylog{M})$.
Indeed, this algorithm requires to build several induced subgraphs in time roughly proportional to their size, that requires a different preprocessing on the adjacency lists.
We prove that we do not need to construct these induced subgraphs {\em entirely} in order to derive a constant-factor approximation of the girth. 

Similarly, for the graphs with non negative edge-weights we need to modify the algorithm of Theorem~\ref{thm:main-unbounded-weight}.
In particular, we cannot use our polynomial-factor approximation algorithm for the girth directly, as it needs to enumerate all edges in the graph.
We overcome this difficulty through the help of a classical density result for the $C_4$-free {\em unweighted} graphs~\cite{Bro66}.

    \subsection{Related work}\label{sec:related-work}

\paragraph{Approximation algorithms.}
Itai and Rodeh were the first to study the {\sc Girth} problem for {\em unweighted} graphs~\cite{ItR78}. Among other results, they showed how to compute an additive $+1$-approximation of the girth in time ${\cal O}(n^2)$.
This was later completed by Lingas and Lundell~\cite{LiL09}, who proposed a randomized quasi $2$-approximation algorithm for this problem that runs in time ${\cal O}(n^{3/2}\sqrt{\log{n}})$.
In~\cite{RoV12}, Roditty and Vassilevska Williams presented the first deterministic approximation algorithm for the girth of unweighted graphs.
Specifically, they obtained a $2$-approximation algorithm in time $\tilde{\cal O}(n^{5/3})$, and they conjectured that there does not exist any subquadratic $(2-\varepsilon)$-approximation for {\sc Girth}.
We obtain the same approximation factor for weighted graphs, and we almost match their running time up to polylog factors and to an additional term in $\tilde{O}(m)$.
It would be interesting to know whether in our case, this dependency on $m$ can be removed while preserving the approximation factor $2$.
Very recently, new subquadratic-time approximation algorithms were proposed for {\sc Girth} in unweighted graphs (see~\cite{DKS17}). 
It is open whether one can achieve a constant-factor approximation for the girth in, say, $\tilde{\cal O}(n^{1+o(1)})$-time.

Much less is known about the girth of weighted graphs.
The first known subcubic approximation was the one of Lingas and Lundell~\cite{LiL09}, that only applies to the graphs with bounded integer edge-weights.
Their work somewhat generalizes the algorithm of Itai and Rodeh for unweighted graphs. 
The approximation factor was later improved to $4/3$ by Roditty and Tov, still for the graphs with bounded integer weights, and to $4/3 + \varepsilon$ for the graphs with {\em arbitrary weights}~\cite{RoT13}.
Our algorithms in this paper are faster than these two previous algorithms, but they use the latter as a routine to be applied on several subgraphs of sublinear size.
Therefore, the approximation factors that we obtain cannot outperform those obtained in~\cite{LiL09,RoT13}.

More recently, a breakthrough logarithmic approximation of the girth of {\em directed} weighted graphs was obtained in~\cite{PRST+18}.

\paragraph{FPT in P.}
A nascent branch of the ``Hardness in P'' program aims at obtaining faster {\em exact} algorithms for polynomial-time solvable problems assuming some parameter of the instance is bounded ({\it e.g.}, see~\cite{AVW16,GMN17}).
This has been done for {\sc Girth} in~\cite{IOO18} and (only for unweighted graphs) in~\cite{CDP18},  where the authors obtained quasi linear-time algorithms parameterized by {\em treewidth} and {\em clique-width}, respectively.
We stress that a natural parameterization of the problem by the actual value of the girth does not look that promising, given that {\sc Triangle Detection} in unweighted graphs is already a hard problem on its own.  

\paragraph{Approximate distances.}
Finally, approximation algorithms for the girth are tightly related to the computation of approximate distances in weighted graphs.
In a seminal paper~\cite{ThZ05}, Thorup and Zwick showed that we can compute in expected time ${\cal O}(n^{1/k}m)$ an approximate distance oracle: that can answer any distance query in time ${\cal O}(k)$ with a multiplicative stretch at most $2k-1$.
This has been improved in several follow-ups~\cite{BaS06,Che14,PaR14,RTZ05,Wul12}. 
However, the construction of most oracles already takes (super)quadratic time for moderately dense graphs.
Our key observation in this paper is that we do not need to construct these oracles {\em entirely} if we just want to approximate the girth.
This allows us to avoid a great deal of distance computations, and so, to lower the running time.

\subsection{Organization of the paper}\label{sec:organization}

We start gathering in Section~\ref{sec:prelim} some known results from the literature that we will use for our algorithm.
Then, in Section~\ref{sec:algo}, we give some new insights on the algorithm of Lingas and Lundell~\cite{LiL09} before presenting our main result (Theorem~\ref{thm:main-bounded-weight}).
Our algorithm is generalized to graphs with arbitrary weights in Section~\ref{sec:big-weights}.
Finally, we remove the dependency on the number of edges in the time complexity of our algorithms in Section~\ref{sec:remove-edges}.
We conclude this paper with some open perspectives (Section~\ref{sec:ccl}).

\section{Preliminaries}\label{sec:prelim}

We refer to~\cite{BoM08} for any undefined graph terminology.
Graphs in this study are finite, simple (hence, without any loop nor multiple edges), connected and edge-weighted.
Specifically, we denote a weighted graph by a triple $G=(V,E,w)$ where $w: E \to \mathbb{R}^+$ is the edge-weight function of $G$.
The weight of a subgraph $H \subseteq G$, denoted $w(H) := \sum_{e \in E(H)} w_e$, is the sum of the weights on its edges.
In particular, the girth of $G$ is the minimum weight of a cycle in $G$.
The distance $dist_G(u,v)$ between any two vertices $u,v \in V$ is the minimum weight of an $uv$-path in $G$.
By extension, for every $v \in V$ and $S \subseteq V$ we define $dist_G(v,S) := \min_{u \in S} dist_G(u,v)$.
-- We will sometimes omit the subscript if no ambiguity on the graph $G$ can occur. --
For any $v \in V$ and $r \geq 0$, we also define the ball $B_G(v,r) := \{ u \in V \mid dist_G(u,v) \leq r \}$.
Finally, an $r$-nearest set for $v$ is any $r$-set ${\cal N}_r(v)$ such that, for any $x \in {\cal N}_r(v)$ and $y \notin {\cal N}_r(v)$, we have $dist_G(v,x) \leq dist_G(v,y)$.

\smallskip
For every $v \in V$, let $N_G(v) = \{ u \in V \mid uv \in E \}$ be the (open) neighbourhood of vertex $v$.
Let $Q_v = \{ vu \mid u \in N_G(v) \}$ be totally ordered.
We call it a {\em sorted} adjacency list if edges incident to $v$ are ordered by increasing weight, {\it i.e.}, $Q_v = (vu_1,vu_2,\ldots,vu_{d_v})$ and $w_{vu_i} \leq w_{vu_{i+1}}$ for every $i < d_v$.
However, we call it an {\em ordered} adjacency list if, given some fixed total ordering $\prec$ over $V$ the neighbours of $v$ are ordered according to $\prec$ ({\it i.e.}, $u_i \prec u_{i+1}$ for every $i < d_v$).
Throughout the rest of the paper we will assume that each vertex has access to two copies of its adjacency list: one being sorted and the other being ordered.
The latter can always be ensured up to an $\tilde{O}(m)$-time preprocessing.

\subsection{The Hitting Set method}

We gather many well-known facts in the literature, that can be found, {\it e.g.}, in~\cite{RTZ05,ThZ05,ACIM99,CLRS+14}.
All these facts are combined in order to prove the following useful result for our algorithms:

\begin{proposition}\label{prop:good-set}
For any graph $G=(V,E,w)$ with {\em sorted} adjacency lists, in $\tilde{O}(n^{5/3})$-time we can compute a set $S \subseteq V$, and the open balls $B_S(v) := \{ u \in V \mid dist(v,u) < dist(v,S) \}$ for every $v \in V$, such that the following two properties hold true:
\begin{enumerate}
\item $|S| = \tilde{\cal O}(n^{2/3})$;
\item and for every $v \in V$ we have $|B_S(v)| = {\cal O}(n^{1/3})$.
\end{enumerate}
\end{proposition}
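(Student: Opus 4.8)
The plan is to take $S$ to be a hitting set for the family of $\lceil n^{1/3}\rceil$-nearest sets of the vertices, and then to read each ball $B_S(v)$ off a single truncated Dijkstra search rooted at $v$. All the ingredients are classical (see~\cite{RTZ05,ThZ05,ACIM99,CLRS+14}); the point is only to check that, thanks to the \emph{sorted} adjacency lists, every step runs within the claimed $\tilde{O}(n^{5/3})$ bound. I would set $q:=\lceil n^{1/3}\rceil$ throughout.

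\textbf{Step 1: the $q$-nearest sets.} For every $v\in V$ I would compute a fixed $q$-nearest set ${\cal N}_q(v)$, together with all distances $dist(v,x)$ for $x\in{\cal N}_q(v)$, by a Dijkstra search from $v$ that is halted as soon as $q$ vertices have been settled (equidistant vertices being ordered by a fixed total order on $V$). The naive cost of such a search is $O(q\log n)$ plus the total degree of the settled vertices, which can be $\Omega(n)$ already for one source; to avoid scanning a whole neighbourhood I would run it lazily on the sorted lists. Each settled vertex $s$ keeps a pointer to its next un-relaxed incident edge, in nondecreasing order of weight, and the priority queue holds, for every settled $s$, only the single frontier pair $(dist(v,s)+w_{sx},x)$ currently pointed at; extracting such a pair advances the pointer of $s$ and, if $x$ is still unsettled, settles $x$ at distance $dist(v,s)+w_{sx}$ and inserts the first frontier pair of $x$. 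Every frontier pair that ever gets extracted has key at most the settling distance of the $q$-th settled vertex, hence its endpoint belongs to ${\cal N}_q(v)$; so the pointer of a settled vertex $s$ is advanced at most $|N_G(s)\cap{\cal N}_q(v)|\le q$ times, the search performs $O(q^2)=O(n^{2/3})$ queue operations, and the total over all $n$ sources is $\tilde{O}(n^{5/3})$.

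\textbf{Step 2: the hitting set.} I would then run the greedy set-cover heuristic on the family $\{{\cal N}_q(v)\}_{v\in V}$, repeatedly adding to $S$ a vertex that lies in the largest number of not-yet-hit sets. The standard averaging argument gives $|S|=O((n/q)\log n)=\tilde{O}(n^{2/3})$, and, keeping for each candidate vertex a counter of the still-uncovered sets containing it together with a bucket priority queue, this runs in $\tilde{O}(nq)=\tilde{O}(n^{4/3})$ time: each of the $nq$ vertex/set incidences is touched once, when its set first becomes hit. (One could instead let $S$ be a random sample that keeps each vertex independently with probability $\Theta(n^{-1/3}\log n)$ --- a Chernoff bound then gives $|S|=\tilde{O}(n^{2/3})$ and a union bound gives $S\cap{\cal N}_q(v)\neq\emptyset$ for every $v$ with high probability; the greedy step above is exactly the usual derandomization, cf.~\cite{RTZ05}.) In all cases $S$ meets every ${\cal N}_q(v)$, which is the first asserted property. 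For the second one, fix $x_0\in S\cap{\cal N}_q(v)$: then $dist(v,S)\le dist(v,x_0)$, and any $u$ with $dist(v,u)<dist(v,S)$ must lie in ${\cal N}_q(v)$, for otherwise $dist(v,x_0)\le dist(v,u)$, a contradiction. Hence $B_S(v)\subseteq{\cal N}_q(v)\setminus\{x_0\}$ and $|B_S(v)|<q=O(n^{1/3})$.

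\textbf{Step 3: the balls.} Finally, for each $v$ I would run one more truncated Dijkstra from $v$, now ordering equidistant vertices so that those of $S$ come first, and stopping at the first settled vertex of $S$; the vertices settled strictly earlier are exactly $B_S(v)$. By Step~2 this search settles only $O(n^{1/3})$ vertices, and with the same lazy sorted-list implementation as in Step~1 --- where now every frontier pair extracted before we stop has key below $dist(v,S)$, hence endpoint in $B_S(v)$ --- its cost is $\tilde{O}(\sum_{s\in B_S(v)}|N_G(s)\cap B_S(v)|)=\tilde{O}(|B_S(v)|^2)=\tilde{O}(n^{2/3})$, for $\tilde{O}(n^{5/3})$ in total. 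The one delicate point --- and the reason the exponent is $5/3$ rather than $4/3$ --- is precisely this time bound for the per-vertex searches: without sorted adjacency lists, a single high-degree vertex sitting inside some $B_S(v)$ would force a full scan of its neighbourhood, whereas the sorted order lets us charge all the work to the (at most $|B_S(v)|^2$) edges having both endpoints close to $v$.
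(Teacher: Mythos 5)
Your proposal is correct and follows essentially the same route as the paper: a greedy hitting set for the $\lceil n^{1/3}\rceil$-nearest sets, which are themselves computed by truncated Dijkstra searches on the sorted adjacency lists (the paper simply cites these two ingredients as Lemma~\ref{lem:nearest-set} and Lemma~\ref{lem:greedy-hitting-set} rather than re-deriving them as you do). The only deviation is your Step 3: the paper extracts each ball $B_S(v)$ by an ${\cal O}(n^{1/3})$-time scan of the already-stored, distance-labelled nearest set ${\cal N}_r(v)$ (using $B_S(v)\subseteq{\cal N}_r(v)$), for ${\cal O}(n^{4/3})$ total, whereas you rerun a truncated Dijkstra per vertex -- slightly more costly at $\tilde{\cal O}(n^{5/3})$, but still within the claimed bound.
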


It is well-known that a set $S$ as requested by Proposition~\ref{prop:good-set} can be constructed {\em randomly} as follows: every vertex in $V$ is added in $S$ with equal probability $n^{-1/3}$~\cite{ThZ05}.
We can compute such a set $S$ deterministically by combining the following two lemmata:

\begin{lemma}[~\cite{CLRS+14}]\label{lem:nearest-set}
For every $G=(V,E,w), v \in V \ \text{and} \ r \geq 0$, we can compute an $r$-nearest set for $v$ in time ${\cal O}(r^2)$ (assuming {\em sorted} adjacency lists). Furthermore, all the vertices in this set can be labeled with their distance to $v$ within the same amount of time.
\end{lemma}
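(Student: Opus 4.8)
The plan is to obtain the $r$-nearest set by running Dijkstra's single-source shortest-path algorithm from $v$, but stopping as soon as $r$ vertices have been permanently settled; the output is then the set $T$ of those vertices, each labelled by the value at which it was settled (which I will argue equals its distance to $v$). The only delicate point is to make a single run touch $O(r^2)$ edges instead of all of $E$, and this is exactly where the \emph{sorted} adjacency lists are used. I would explore edges lazily: with every settled vertex $u$ I keep a pointer into its sorted list $Q_u=(ux_1^u,ux_2^u,\ldots)$; when $u$ gets settled at distance $d_u$ I push into the priority queue only the single key $d_u+w_{ux_1^u}$, tagged with $u$ and the position $1$; whenever a key $(d_u+w_{ux_i^u},u,i)$ is extracted I advance $u$'s pointer and push $d_u+w_{ux_{i+1}^u}$ if it exists; and if the extracted key points to a vertex $x=x_i^u$ that is already settled I discard it, otherwise I settle $x$ with $d_x:=d_u+w_{ux_i^u}$ and create its pointer. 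This way the queue holds at most one key per settled vertex, so its size never exceeds $|T|\le r$.

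For correctness I would re-run the textbook Dijkstra invariant, the one extra ingredient being the claim that at every moment the current minimum key of the queue is a lower bound on $dist_G(v,y)$ for every not-yet-settled $y$. Granting this, the key that is extracted and triggers a settlement always carries the true distance; moreover every key pushed after an extraction of value $d$ has value $\ge d$ (because the lists are sorted), so settlements occur in non-decreasing distance order, and hence after $r$ settlements every unsettled vertex lies at distance at least $\max_{x\in T}d_x$ — which is precisely the defining property of an $r$-nearest set. To prove the lower-bound claim I would take a shortest $vy$-path, let $u'$ be its last settled vertex and $z$ its successor on the path (so $z$ is unsettled and $u'z\in E$), and show that the key currently representing $u'$ in the queue has value at most $d_{u'}+w_{u'z}=dist_G(v,z)\le dist_G(v,y)$: the pointer at $u'$ cannot have moved past the edge $u'z$, since otherwise $u'z$ would have been extracted earlier and, $z$ being unsettled, $z$ would have been settled at that moment; hence, by the sorted order, the pending key at $u'$ is no larger than $d_{u'}+w_{u'z}$.

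For the running time I would bound the number of queue operations. Each extraction advances exactly one vertex's pointer, and a pointer at $u$ can only reach position $i$ once positions $1,\dots,i-1$ of $Q_u$ have been extracted, at which point $x_1^u,\dots,x_{i-1}^u$ together with $u$ are $i$ pairwise distinct settled vertices, so $i\le|T|\le r$. Thus each of the $\le r$ settled vertices contributes $\le r$ extractions, for $O(r^2)$ operations in total; since the queue has size $O(r)$ this is $O(r^2\log r)=\tilde{\cal O}(r^2)$ time, which matches the $O(r^2)$ bound of~\cite{CLRS+14} up to a logarithmic factor that is anyway absorbed by the $\tilde{\cal O}$-notation used throughout the paper. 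Random access to the $i$-th entry of a sorted list, membership in $T$, and the distance look-ups are $O(1)$, and the required distance labels are produced for free as vertices are settled.

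The step I expect to be the crux is the lower-bound claim on the queue minimum, i.e.\ certifying that scanning each settled vertex's list \emph{one edge at a time} never conceals the edge of a shortest path to an as-yet-undiscovered vertex; this is exactly the place where the hypothesis of sorted adjacency lists is indispensable, for without it one would have to scan whole neighbourhoods and lose the bound that is independent of $n$ and $m$. A minor loose end to settle is the degenerate regime $r\ge n$ (equivalently, the queue emptying before $r$ settlements), in which one simply returns all of $V$ with the computed distances.
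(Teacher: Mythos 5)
The paper gives no proof of this lemma --- it is imported verbatim from~\cite{CLRS+14} --- and your truncated Dijkstra with lazily advanced pointers into the sorted adjacency lists is essentially the standard argument behind the cited result, so in substance you have reproved the right statement. Your correctness reasoning is sound: the pending key of the last settled vertex on a shortest $vy$-path lower-bounds $dist_G(v,y)$ for every unsettled $y$ (the pointer cannot have passed the edge to the unsettled successor $z$, and sortedness bounds the pending key by $d_{u'}+w_{u'z}=dist_G(v,z)$), keys are extracted in non-decreasing order, hence settled labels are exact distances and the $r$ settled vertices form an $r$-nearest set; your charging argument that no pointer ever advances past position $r$ is also correct. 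The one discrepancy is quantitative: your scheme performs one extract-min per pointer advance, i.e.\ $\Theta(r^2)$ heap operations on a heap of size at most $r$, which is ${\cal O}(r^2\log r)$ rather than the stated ${\cal O}(r^2)$. This is harmless for the paper, since the lemma is only invoked inside $\tilde{\cal O}(n^{5/3})$ bounds in Proposition~\ref{prop:good-set}, but to obtain the literal bound one uses the classical variant from the cited reference: upon settling a vertex, relax its at most $r$ lightest edges in one batch (your pointer argument already shows later positions are never needed), keeping each discovered vertex once in a Fibonacci heap, so that there are only $r$ delete-mins at ${\cal O}(\log r)$ each and ${\cal O}(r^2)$ amortized-constant-time insert/decrease-key operations, i.e.\ ${\cal O}(r^2 + r\log r)={\cal O}(r^2)$ in total. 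The remaining points you flag are indeed only cosmetic: initialize by settling $v$ itself at distance $0$, and note that under the paper's connectivity assumption the queue cannot empty before $r$ settlements whenever $r\leq n$.
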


\begin{lemma}[~\cite{ACIM99}]\label{lem:greedy-hitting-set}
Given $n$ sets of size $r$ over a universe of size $s$, a set $S$ of size ${\cal O}(n/r \log{n})$ hitting all $n$ sets in at least one element can be found deterministically in time ${\cal O}(s + nr)$.
\end{lemma}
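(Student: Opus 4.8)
The plan is to analyze the classical greedy algorithm for {\sc Hitting Set}, observing that for equal-size sets it already meets the stated size bound, and then to show that a counting-sort--style implementation (as opposed to a heap) brings the running time down to ${\cal O}(s+nr)$. Let ${\cal F}=\{F_1,\dots,F_n\}$ be the input family, with each $|F_i|=r$, over a universe $U$ of size $s$ that we identify with $\{1,\dots,s\}$. The algorithm keeps the subfamily ${\cal U}\subseteq{\cal F}$ of not-yet-hit sets (initially ${\cal U}={\cal F}$) and, while ${\cal U}\neq\emptyset$, picks an element $e\in U$ lying in a maximum number of members of ${\cal U}$, adds $e$ to the output $S$, and deletes from ${\cal U}$ every set containing $e$. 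When ${\cal U}$ empties we return $S$; by construction $S$ hits all of ${\cal F}$.

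For the size bound, I would first show that each greedy step removes a constant fraction of ${\cal U}$. If $|{\cal U}|=k$, then $\sum_{F_i\in{\cal U}}|F_i|\geq kr$ incidences are spread over at most $s$ elements of $U$, so by averaging some element lies in at least $kr/s$ of the sets of ${\cal U}$. Hence one step shrinks $|{\cal U}|$ from $k$ to at most $k(1-r/s)$, so after $t$ steps $|{\cal U}|\leq n(1-r/s)^t\leq n\,e^{-tr/s}$, which drops below $1$ --- hence to $0$ --- as soon as $t>(s/r)\ln n$. Thus $|S|\leq (s/r)\ln n+{\cal O}(1)={\cal O}((s/r)\log n)$; in the regime of interest (the universe is a subset of the vertex set) one has $s\leq n$, giving the claimed ${\cal O}((n/r)\log n)$.

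The delicate part is running the greedy loop in total time ${\cal O}(s+nr)$, without a logarithmic overhead from repeatedly extracting a maximizer. I would maintain: (i) an integer array $\mathrm{cnt}[\cdot]$ over $U$, where $\mathrm{cnt}[e]$ is the current number of sets of ${\cal U}$ containing $e$; (ii) an inverted index listing, for each $e$, the indices $i$ with $e\in F_i$, built in ${\cal O}(s+nr)$ time by one pass over the input; (iii) buckets $\mathrm{B}[0],\dots,\mathrm{B}[n]$, each a doubly linked list of the elements having a given value of $\mathrm{cnt}$, with back-pointers so an element can be moved to an adjacent bucket in ${\cal O}(1)$ time; and (iv) a Boolean flag per $F_i$ recording whether it is hit, plus a counter of unhit sets. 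Initialization (zeroing $\mathrm{cnt}$, one pass to set the counts, bucketing every element) costs ${\cal O}(s+nr)$. Each round, a single pointer scans down from the last position to the highest non-empty bucket and returns one of its elements $e$; since counts only ever decrease, that highest index is non-increasing over the whole run, so all the scanning costs ${\cal O}(n)$ in total. On selecting $e$, I traverse its inverted-index list: for each still-unhit $F_i$ there I mark $F_i$ hit, decrement the unhit counter, and for each $e'\in F_i$ decrement $\mathrm{cnt}[e']$ and move $e'$ one bucket down. Because each set is hit exactly once, the total decrement work is ${\cal O}(\sum_i|F_i|)={\cal O}(nr)$; because each element of $U$ is selected at most once, each inverted-index incidence $(e,F_i)$ is traversed at most once, another ${\cal O}(nr)$; and there are at most $n$ rounds, so the remaining per-round bookkeeping is ${\cal O}(n)$. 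Summing, the algorithm runs in ${\cal O}(s+nr)$ time.

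The main obstacle is exactly this last point: a naive implementation spends $\Omega(s)$ per round searching for a maximizer, or ${\cal O}(\log n)$ per update with a priority queue, either of which breaks the target bound. The resolution is the monotone bucket pointer --- justified by the fact that coverage counts never increase --- together with the charging argument that each (element, set) incidence is touched only a constant number of times. By contrast, the averaging-plus-geometric-decay bound on $|S|$ is routine, being the standard greedy set-cover analysis specialized to sets of equal size.
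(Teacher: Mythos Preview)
The paper does not give a proof of this lemma; it is quoted from \cite{ACIM99} as a known result, so there is nothing to compare against. Your argument is the standard greedy analysis and is correct. Two brief remarks. First, what the greedy actually yields is $|S|=\mathcal{O}((s/r)\log n)$, and you correctly note that the stated bound $\mathcal{O}((n/r)\log n)$ requires $s\le n$; the paper only invokes the lemma with $s=n$ (the universe is the vertex set and there is one nearest-set per vertex), so nothing is lost in context, though the lemma as phrased in the paper is slightly loose on this point. Second, your implementation via a monotone bucket pointer and the charging argument that each element--set incidence is touched $\mathcal{O}(1)$ times is exactly what is needed to reach $\mathcal{O}(s+nr)$; a heap-based priority queue would not suffice.
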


\begin{proofof}{Proposition~\ref{prop:good-set}}
Set $r = \left\lfloor n^{1/3} \right\rfloor$.
We compute an $r$-nearest set ${\cal N}_r(v)$ for every vertex $v \in V$, that can be done in total time ${\cal O}(n*n^{2/3}) = {\cal O}(n^{5/3})$ by Lemma~\ref{lem:nearest-set}.
By Lemma~\ref{lem:greedy-hitting-set} (applied for $s = n$) we can compute in time ${\cal O}(n + n*n^{1/3}) = {\cal O}(n^{4/3})$ a set $S$ of size $\tilde{O}(n^{2/3})$ that intersects all the nearest sets ${\cal N}_r(v)$.
Since, for every $v \in V$, we have $B_S(v) \subseteq {\cal N}_r(v)$ by construction, we can compute this ball in time ${\cal O}(r)$ simply by scanning the $r$-nearest set, and so, we can compute all the balls $B_S(v), v \in V$ in total time ${\cal O}(nr) = {\cal O}(n^{4/3})$.
\end{proofof}

In what follows we will not only need the balls $B_S(v)$ for every vertex $v$, but also the subgraphs these balls induce in $G$.
Next, we observe that all these subgraphs can be obtained almost for free.
Namely:

\begin{lemma}[folklore]\label{lem:comute-subgraph}
For every $G=(V,E,w)$ and $U \subseteq V$ we can compute the subgraph $G[U]$ induced by $U$ in time $\tilde{\cal O}(|U|^2)$ (assuming {\em ordered} adjacency lists).
\end{lemma}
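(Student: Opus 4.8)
The plan is to exploit the fact that, with ordered adjacency lists, testing adjacency of a given pair of vertices is cheap, so that we never need to scan any adjacency list in full (which would cost $\Theta(m)$, possibly much more than $|U|^2$).

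First I would sort the vertices of $U$ according to the fixed total order $\prec$; this takes $O(|U|\log|U|)$ time. Then, for every $u \in U$, I regard its ordered adjacency list $Q_u$ as an array sorted by $\prec$, and for every $v \in U$ I binary-search for the entry $vu$ in $Q_u$. Each such search costs $O(\log d_u) = O(\log n)$ time, and whenever it succeeds it also returns the edge $vu$ together with its weight $w_{vu}$; I then append $v$ (with weight $w_{vu}$) to the list of $u$ in $G[U]$. Since I examine the candidates $v$ in $\prec$-order, the adjacency lists produced for $G[U]$ come out ordered for free; if sorted adjacency lists for $G[U]$ are also wanted, a final sort of each list by weight costs $O(|E(G[U])|\log|U|)$ in total. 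Because $G$ is simple we have $|E(G[U])| \le |U|^2$, so the whole procedure amounts to $|U|^2$ binary searches plus $\tilde{\cal O}(|U|^2)$ bookkeeping, i.e. $\tilde{\cal O}(|U|^2)$ time overall.

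The only point that needs a little care is the model assumption: the bound relies on the ordered adjacency list $Q_u$ being stored so as to support binary search (equivalently, random access), which is exactly the standard RAM representation of a sorted list and is what is implicitly meant by ``ordered adjacency lists'' in the preliminaries. Beyond that there is no real obstacle: one simply has to make sure that for each vertex $u\in U$ and each $v\in U$ we touch only $O(\log n)$ entries of $Q_u$, rather than the naive alternative of scanning each $Q_u$ entirely, which would give $O(\sum_{u\in U} d_u)$ and can be as large as $\Theta(m)\gg |U|^2$. (If one prefers to avoid per-pair binary searches, an equivalent finger/exponential search that merges the $\prec$-sorted set $U$ into each $Q_u$ in $O(|U|\log(d_u/|U|)) = O(|U|\log n)$ time per vertex yields the same $\tilde{\cal O}(|U|^2)$ bound.)
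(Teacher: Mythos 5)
Your proposal is correct and uses essentially the same argument as the paper: for each pair $x,y \in U$, a dichotomic (binary) search in the ordered adjacency list decides adjacency in ${\cal O}(\log n)$ time, giving $\tilde{\cal O}(|U|^2)$ overall. The extra remarks about keeping the output lists ordered/sorted are fine but not needed for the stated bound.
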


\begin{proof}
For every $x,y \in U$ we search whether $yx$ is present in the adjacency list of $y$. Since the adjacency lists are ordered, this can be done in time ${\cal O}(\log{n})$ by using a dichotomic search. 
\end{proof}

\section{Case of graphs with bounded integer weights}\label{sec:algo}

This section is devoted to the proof of Theorem~\ref{thm:main-bounded-weight}.
We start presenting some new properties of a previous approximation algorithm for the girth of weighted graphs (Section~\ref{sec:hbd}) as we will need to use them in our own algorithm.
Then, we prove our main result for graphs with bounded integer weights in Section~\ref{sec:main}.

    \subsection{Reporting a close short cycle}\label{sec:hbd}

We propose a deeper analysis of an existing approximation algorithm for {\sc Girth} on weighted graphs~\cite{LiL09}.
Roughly, this algorithm applies a same procedure to every vertex of the graph.
In order to derive the approximation factor of their algorithm, the authors in~\cite{LiL09} were considering a run that takes as entry some vertex {\em on a shortest cycle}.
This is in contrast with the classical algorithm from Itai and Rodeh on unweighted graphs~\cite{ItR78}, that also offers provable guarantees on the length of the output assuming there is a shortest cycle passing {\em close} to the source (but not necessarily passing by this vertex); see~\cite[Lemma 2]{LiL09}.
We revisit the analysis of the algorithm in~\cite{LiL09} for weighted graphs, and we prove that this algorithm also satisfies such a ``closeness property''.

\paragraph{The {\tt HBD}-algorithm from~\cite{LiL09}.}
Given $G=(V,E,w)$, $s \in V$ and $t \geq 0$, the algorithm $HBD(G,s,t)$ is a relaxed version of Dijkstra's single-source shortest-path algorithm.
We are only interested in computing the ball of radius $t$ around $s$, and so, we stop  if there is no more unvisited vertex at a distance $\leq t$ from $s$.
Furthermore, whenever we visit a vertex $u \in B_G(s,t)$, we only relax edges $e = \{u,v\}$ such that $dist(s,u) + w_e \leq t$.
Then, a cycle is detected if we already inferred that $dist(s,v) \leq t$ ({\it i.e.}, using another neighbour of $v$ than $u$).
Overall, the algorithm stops as soon as it encounters a cycle, or all the vertices in $B_G(s,t)$ were visited. 
Assuming sorted adjacency lists, each call to this algorithm runs in $\tilde{\cal O}(n)$-time~\cite{LiL09}.

\begin{figure}[!ht]
    \centering
    \begin{subfigure}{.3\textwidth}
        \caption{\texttt{HBD}$(G,s,t)$}
        \centering
        \begin{algorithmic}[1]\small
            \FORALL{$v \in V$}
            \STATE $d(v) \gets \infty$; $\pi(v) \gets NIL$
            \ENDFOR
            \STATE $d(s) \gets 0$; $Q \gets \{s\}$
            \WHILE{$Q \neq \emptyset$}
            \STATE $u \gets \texttt{Extract-min}(Q)$
            \STATE \texttt{Controlled-Relax}$(u,t)$
            \ENDWHILE
        \end{algorithmic}
     \end{subfigure}
        \hfill
     \begin{subfigure}{.3\textwidth}
        \caption{\texttt{Controlled-Relax}$(u,t)$}
        \centering
        \begin{algorithmic}[1]\small
            \STATE $Q_u \gets$ sorted adj. list
            \STATE $uv \gets \texttt{Extract-min}(Q_u)$
            \WHILE{$d(u) + w_{uv} \leq t$}
            \STATE \texttt{RelaxOrStop}$(u,v)$
            \STATE $uv \gets \texttt{Extract-min}(Q_u)$
            \ENDWHILE
        \end{algorithmic}
     \end{subfigure}
     \begin{subfigure}{.3\textwidth}
        \caption{\texttt{RelaxOrStop}$(u,v)$}
        \centering
        \begin{algorithmic}[1]\small
            \IF{$d(v) \neq \infty$}
            \RETURN a cycle and stop
            \ELSE
            \STATE $d(v) \gets d(u) + w_{uv}$
            \STATE $Q \gets Q \cup \{v\}$
            \ENDIF
        \end{algorithmic}
     \end{subfigure}
    \label{algorithms}
\end{figure}


\begin{lemma}[~\cite{LiL09}]\label{lem:stop-output}
If \texttt{HBD}$(G,s,t)$ detects a cycle, then its weight is $\leq 2t$.
\end{lemma}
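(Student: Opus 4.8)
The plan is to reason directly about the internal state of \texttt{HBD}$(G,s,t)$ at the instant it reports a cycle. Detection occurs inside some call \texttt{RelaxOrStop}$(u,v)$, itself invoked from \texttt{Controlled-Relax}$(u,t)$, precisely when $d(v)\neq\infty$ while the loop guard $d(u)+w_{uv}\le t$ holds; moreover, by the ``another neighbour'' clause in the description, $v$ had received its value through a vertex different from $u$. First I would record the two inequalities I need. (i) $d(u)+w_{uv}\le t$, straight from the guard (in particular $d(u)\le t$). (ii) $d(v)\le t$: write $\pi(v)$ for the vertex from which $v$ was relaxed; then $d(v)$ was assigned exactly once, in \texttt{RelaxOrStop}$(\pi(v),v)$, as $d(v)=d(\pi(v))+w_{\pi(v)v}$, and at that moment the guard of \texttt{Controlled-Relax}$(\pi(v),t)$ guaranteed $d(\pi(v))+w_{\pi(v)v}\le t$. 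Since \texttt{HBD} never modifies a finite $d$-value, both quantities are final.

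Next I would exhibit the cycle. The relaxation pointers $\pi$ form a tree rooted at $s$ (because $d(s)=0$ and $\pi(s)=\textit{NIL}$), and for every reached vertex $x$ the unique tree path from $s$ to $x$ has weight exactly $d(x)$, by telescoping $w_{\pi(x)x}=d(x)-d(\pi(x))\ge 0$; in particular $d$ is nondecreasing along that path. Both $u$ and $v$ are reached. Let $z$ be their lowest common ancestor in this tree. Then the tree paths $z\rightsquigarrow u$ and $z\rightsquigarrow v$ meet only at $z$, have weights $d(u)-d(z)\ge 0$ and $d(v)-d(z)\ge 0$, and together with the edge $uv$ they close a cycle $C$. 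Here I use that the reported pair is not a parent edge — this is exactly what the ``already reached through a neighbour other than $u$'' condition buys, together with the standard convention of not reporting $v=\pi(u)$ — so $uv$ is distinct from the (at most two) tree edges on these paths and $C$ is a genuine simple cycle of $G$.

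Finally I would estimate $w(C)=(d(u)-d(z))+w_{uv}+(d(v)-d(z))=(d(u)+w_{uv})+d(v)-2d(z)\le (d(u)+w_{uv})+d(v)\le t+t=2t$, using $d(z)\ge 0$ together with (i) and (ii). Equivalently and slightly more robustly: $C$ is a sub-cycle of the closed walk $s\rightsquigarrow u\to v\rightsquigarrow s$ whose weight is $d(u)+w_{uv}+d(v)\le 2t$, and since all edge-weights are nonnegative no sub-cycle can be heavier than the walk. Either way $w(C)\le 2t$, which is the claim.

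I do not expect any deep difficulty here; the one point that needs care is purely bookkeeping, namely certifying that the object reported at detection is an honest cycle rather than the edge $uv$ traversed back and forth (the spurious ``parent edge'' situation). I plan to dispatch this with the convention, already implicit in the algorithm's description, that detection is triggered only when $v$ was reached via a neighbour distinct from $u$; once that is granted, everything reduces to the telescoping identity for tree-path weights and the two guard inequalities above.
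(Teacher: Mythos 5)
Your proof is correct, and it is essentially the argument behind the cited result: at detection, the guard gives $d(u)+w_{uv}\le t$, the earlier guard gives $d(v)\le t$, and the cycle closed by $uv$ together with the two relaxation-tree paths (meeting at their lowest common ancestor) has weight at most $d(u)+w_{uv}+d(v)\le 2t$. Note that the paper itself offers no proof of this lemma --- it is imported from Lingas and Lundell~\cite{LiL09} --- so there is nothing to diverge from; your write-up also rightly isolates the one delicate point, namely that detection must not be triggered by the edge through which $d(u)$ was assigned (the ``parent edge''), a convention the paper's pseudocode leaves implicit but which its prose (``using another neighbour of $v$ than $u$'') and its use of the lemma in Lemma~\ref{lem:non-detect-cond} presuppose.
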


We now complete the analysis of the \texttt{HBD}-algorithm in order to derive a generalization of~\cite[Lemma 2]{LiL09} to weighted graphs.
Assuming no cycle has been detected, we first gain more insights on the structure of the ball of radius $t$ centered at $s$.
Specifically, what the following lemma just says is that the set of all edges relaxed by the algorithm induces a spanning tree of the subgraph that is induced by $B_G(s,t)$.

\begin{lemma}\label{lem:non-detect-cond}
If \texttt{HBD}$(G,s,t)$ does not detect a cycle then, for any $v \in B_G(s,t)$, there exists a unique $sv$-path of weight $\leq t$.
\end{lemma}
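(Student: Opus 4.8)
The statement is essentially that the edges relaxed by \texttt{HBD}$(G,s,t)$ form a spanning tree on $B_G(s,t)$ when no cycle is reported. I would prove this by induction on the order in which vertices are extracted from the priority queue $Q$, mirroring the standard correctness proof of Dijkstra's algorithm but tracking the extra subtlety coming from the \texttt{RelaxOrStop} test.

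\medskip
\noindent\textbf{Key steps.} First I would set up notation: let $v_1 = s, v_2, \dots, v_k$ be the vertices in the order they are extracted by \texttt{Extract-min}, and let $F$ be the set of edges that get relaxed (i.e.\ those $uv$ for which line~4 of \texttt{RelaxOrStop} assigns $d(v)\gets d(u)+w_{uv}$ and sets $\pi(v)\gets u$). Since no cycle is detected, every time \texttt{RelaxOrStop}$(u,v)$ is called we are in the case $d(v)=\infty$, so each vertex $v\neq s$ receives a value $d(v)$ and a parent $\pi(v)$ exactly once, namely at the moment its unique incoming edge of $F$ is relaxed; hence $|F| = |\{v : d(v)<\infty\}| - 1$ and $F$ is a forest rooted at $s$ (following parent pointers strictly decreases $d$, so there are no cycles in $F$). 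The content of the lemma is then twofold: (a) the set of vertices ever assigned a finite label is exactly $B_G(s,t)$, and (b) for each such $v$, the path from $v$ to $s$ in $F$ has weight $d(v)\le t$ and this is the \emph{only} $sv$-path of weight $\le t$.

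For (a), one inclusion is immediate: whenever $d(v)$ is set it equals $d(u)+w_{uv}$ for the relaxed edge, and the \texttt{Controlled-Relax} guard $d(u)+w_{uv}\le t$ guarantees $d(v)\le t$; and by induction $d(u)$ is the weight of an actual $su$-path in $G$, so $d(v)$ is the weight of an actual $sv$-path, giving $dist_G(s,v)\le t$. For the reverse inclusion — every $v$ with $dist_G(s,v)\le t$ gets visited with the correct label — I would argue exactly as in Dijkstra: at the moment $v$ is extracted, $d(v)=dist_G(s,v)$, because if $P$ is a shortest $sv$-path of weight $\le t$, then every prefix of $P$ has weight $\le t$, so inductively each edge of $P$ satisfies the \texttt{Controlled-Relax} guard when its tail is processed, hence the whole of $P$ gets "traced out" by relaxations and $v$ ends up with $d(v)=w(P)\le t$. (Here I would be careful: processing order is by increasing $d$-value, so I should phrase the induction over vertices of $B_G(s,t)$ in nondecreasing distance order and invoke that a vertex at distance $\le t$ with all ancestors-on-$P$ already correctly labeled must itself be queued.) This also shows $d(v)=dist_G(s,v)$ for all $v\in B_G(s,t)$, so the $F$-path from $v$ to $s$ is a shortest path. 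Finally, uniqueness in (b): suppose $v\in B_G(s,t)$ admitted two distinct $sv$-paths $P, P'$ each of weight $\le t$. Let $x$ be the last vertex common to $P$ and $P'$ reading from $s$ (it exists since $s$ is on both), and let $a,b$ be the respective next vertices; then $a\ne b$, and tracing the argument of (a), both edges $xa$ and $xb$ get relaxed (their prefixes have weight $\le t$ and the guard fires), so both $a$ and $b$ obtain finite labels. Following $P$ from $a$ onward and $P'$ from $b$ onward, both reach $v$; since $P\ne P'$ diverge at $x$ and reconverge at $v$, the union $P\cup P'$ contains a cycle, and along that cycle \emph{every} vertex has a finite label at the end of the run. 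I then need to derive a contradiction with "no cycle detected": the first time the algorithm relaxes an edge whose other endpoint already has a finite label, \texttt{RelaxOrStop} reports a cycle — so if every vertex of some cycle $C\subseteq G[B_G(s,t)]$ is ever labeled, at least one edge of $C$ must have triggered the detection (the last edge of $C$ whose relaxation was "attempted", whichever endpoint got labeled second). Making this last point rigorous is the crux.

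\medskip
\noindent\textbf{Main obstacle.} The delicate part is precisely the uniqueness claim / the "no cycle detected $\Rightarrow$ $F$ spans a tree not just a forest avoiding $B_G(s,t)$-internal cycles": I must rule out that a cycle lives entirely inside $B_G(s,t)$ yet escapes detection because, say, one of its edges is never scanned. The resolution is that the \texttt{Controlled-Relax} guard is determined solely by $d(u)+w_{uv}\le t$, and I will have already established $d(u)=dist_G(s,u)$ for every $u\in B_G(s,t)$; so an edge $uv$ of a cycle $C\subseteq G[B_G(s,t)]$ fails to be scanned only if $dist_G(s,u)+w_{uv}>t$ \emph{and} $dist_G(s,v)+w_{vu}>t$ — but if $C$ has weight $\le 2t$ one can pick the diametrically-opposite-to-$s$ edge of $C$ and bound both sides by $t$, and if $C$ has weight $>2t$ it is irrelevant to girth-$\le t$ reasoning; the honest fix is that for the \emph{shortest} $sv$-path uniqueness we only ever invoke paths of weight $\le t$, so every relevant edge genuinely gets scanned, forcing detection. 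I would also double-check the boundary behavior of \texttt{Extract-min} on an empty $Q_u$ (the pseudocode's inner loop), but that is a cosmetic matter. So the write-up will be: (1) $F$ is a forest and each labeled $v$ has $d(v)\le t$; (2) Dijkstra-style induction: labeled vertices $=B_G(s,t)$ and $d(v)=dist_G(s,v)$; (3) any second $sv$-path of weight $\le t$ would be fully scanned, creating a detected cycle, contradiction.
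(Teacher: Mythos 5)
Your proposal is correct and follows essentially the same route as the paper's proof: show that every edge of the two weight-$\leq t$ paths passes the relaxation guard and is therefore scanned, observe that the union of two distinct $sv$-paths contains a cycle, and derive the contradiction from the last scanned edge of that cycle, both of whose endpoints already carry finite labels, which would force \texttt{RelaxOrStop} to report a cycle. The only difference is that you spell out the Dijkstra-style induction establishing $d(v)=dist_G(s,v)$ on $B_G(s,t)$, which the paper leaves implicit (inherited from the analysis of \texttt{HBD} in the cited work); your digression about edges of weight-$>2t$ cycles is unnecessary, since, as you conclude, every edge you need lies on a path of weight $\leq t$ and is thus scanned.
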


\begin{proof}
Suppose for the sake of contradiction that there exist $P_1,P_2$ two different $sv$-path of weight $\leq t$.
For every $i \in \{1,2\}$, let $P_i = (x_0^i = s, x_1^i, x_2^i, \ldots, x_{\ell_i}^i = v)$.
We also denote by $P_i[x_p^i,x_q^i]$ the subpath between $x_p^i$ and $x_q^i$.
Then, for every $j \in \{0, 1, 2, \ldots, \ell_i\}$, $dist(s,x_j^i) \leq w(P_i[s,x_j^i]) \leq w(P_i) \leq t$, and so (since no cycle is detected), $x_j^i$ is visited by the algorithm.
Furthermore, if $j \neq \ell_i$ then, $dist(s,x_j^i) + w_{x_j^ix_{j+1}^i} \leq w(P_i[s,x_{j+1}^i]) \leq w(P_i) \leq t$, and so, the edge $x_j^ix_{j+1}^i$ is relaxed.
Overall, all the edges of $E(P_1) \cup E(P_2)$ are relaxed, and we claim that it contradicts our assumption that no cycle has been detected.
Indeed, since $P_1,P_2$ are different and they have the same ends, there exists a cycle $C$ such that $E(C) \subseteq E(P_1) \cup E(P_2)$ ({\it e.g.}, see~\cite[Lemma 2.5.]{RoV12}).
Let $e = xy \in E(C)$ be the last edge relaxed on the cycle.
As $x$ and $y$ are each incident to other edges of $E(C)$ that are already relaxed, we have at this step $d(x) \neq \infty$ and $d(y) \neq \infty$, thereby proving the claim.
\end{proof}

Based on Lemma~\ref{lem:non-detect-cond}, we state some bounds on the weight of the cycle detected using {\tt HBD}. 

\begin{corollary}\label{cor:stop-hbd-1}
Given $G=(V,E,w)$, let $s \in V$ and let $C$ be a cycle.
The minimum $t_0$ such that \texttt{HBD}$(G,s,t_0)$ detects a cycle satisfies $t_0 \leq dist(s,C) + w(C)$.
\end{corollary}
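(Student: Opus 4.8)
The plan is to argue by contradiction, retracing the proof of Lemma~\ref{lem:non-detect-cond}. Set $t := dist(s,C) + w(C)$; it suffices to prove that \texttt{HBD}$(G,s,t)$ detects a cycle, for then the minimal $t_0$ with that property satisfies $t_0 \le t$. First I would fix $v^\ast \in V(C)$ realizing $dist(s,v^\ast) = dist(s,C) =: \delta$, traverse the cycle as $C = (v^\ast = u_0, u_1, \dots, u_k = v^\ast)$, and write $a_j$ for the weight of the subpath $(u_0, \dots, u_j)$, so that $0 = a_0 \le a_1 \le \dots \le a_k = w(C)$.

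Suppose for contradiction that \texttt{HBD}$(G,s,t)$ does not detect a cycle. Then, by the description of the algorithm, it behaves like an ordinary Dijkstra run capped by the budget $t$ and it runs to completion: every vertex $v$ with $dist(s,v) \le t$ is eventually extracted with final label $d(v) = dist(s,v)$, and upon its extraction \texttt{Controlled-Relax}$(v,t)$ invokes \texttt{RelaxOrStop}$(v,u)$ on exactly those neighbours $u$ of $v$ with $dist(s,v) + w_{vu} \le t$. (This is in essence what the proof of Lemma~\ref{lem:non-detect-cond} already establishes.)

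Next I would check that all of $C$ gets processed. Concatenating a shortest $sv^\ast$-path with the subpath $(u_0, \dots, u_j)$ of $C$ gives an $su_j$-walk of weight $\delta + a_j$, hence $dist(s,u_j) \le \delta + a_j \le \delta + w(C) = t$, so every $u_j$ is extracted. Moreover $dist(s,u_j) + w_{u_j u_{j+1}} \le \delta + a_j + w_{u_j u_{j+1}} = \delta + a_{j+1} \le \delta + w(C) = t$, so \texttt{RelaxOrStop}$(u_j, u_{j+1})$ is invoked for every $j \in \{0, \dots, k-1\}$.

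Finally I would extract the contradiction exactly as in Lemma~\ref{lem:non-detect-cond}: among these $k$ calls, let \texttt{RelaxOrStop}$(u_{j^\ast}, u_{j^\ast+1})$ be the one performed last during the execution. Before it, the distinct call \texttt{RelaxOrStop}$(u_{j^\ast+1}, u_{j^\ast+2})$ (indices modulo $k$, distinct since $k \ge 3$) had already been performed, which required $u_{j^\ast+1}$ to have been extracted from $Q$ and hence to carry a finite label; so $d(u_{j^\ast+1}) \ne \infty$ still holds when the last call is made, and that call returns a cycle --- contradicting our assumption. Hence \texttt{HBD}$(G,s,t)$ detects a cycle and $t_0 \le t = dist(s,C) + w(C)$. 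The only delicate point I foresee is the clean justification of the claim in the second paragraph --- that, absent any detection, \texttt{HBD} is an ordinary budgeted Dijkstra, so $d(u_j) = dist(s,u_j)$ when $u_j$ is extracted; everything else is bookkeeping on the order in which the edges of $C$ get relaxed.
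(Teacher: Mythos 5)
Your proof is correct, and it takes a slightly different route from the paper's. The paper proves the corollary in two lines by invoking Lemma~\ref{lem:non-detect-cond} as a black box, in contrapositive form: choosing $x \in V(C)$ with $dist(s,x)=dist(s,C)$ and any other $y\in V(C)$, the two arcs $P_1,P_2$ of $C$ between $x$ and $y$ give two different $sy$-paths of weight at most $dist(s,C)+\max\{w(P_1),w(P_2)\}\leq dist(s,C)+w(C)$, which forces detection. You instead re-run the machinery \emph{inside} the proof of Lemma~\ref{lem:non-detect-cond} directly on $C$: assuming no detection, the algorithm behaves as a budgeted Dijkstra, every edge $u_ju_{j+1}$ of $C$ is relaxed because $dist(s,u_j)+w_{u_ju_{j+1}}\leq \delta+a_{j+1}\leq t$, and the last relaxed edge of $C$ yields the contradiction --- exactly the ``last relaxed edge on a cycle'' trick used to prove the lemma. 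So the underlying mechanism is the same, but the packaging differs: your version is more self-contained and, incidentally, avoids a small wrinkle of the black-box route (the concatenation of a shortest $sx$-path with an arc of $C$ need not be a simple path, so applying the uniqueness statement of Lemma~\ref{lem:non-detect-cond} literally takes a word of justification), at the price of having to justify the budgeted-Dijkstra claim yourself. That claim is indeed the only delicate point, and it is not a gap: it follows by induction along a shortest path from $s$, since under the no-detection hypothesis any relaxation into an already-labelled vertex would immediately return a cycle; the paper's own proof of Lemma~\ref{lem:non-detect-cond} uses exactly these facts implicitly.
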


\begin{proof}
Fix $x,y \in V(C)$ and partition the cycle $C$ into the two $xy$-paths $P_1,P_2$.
By the contrapositive of Lemma~\ref{lem:non-detect-cond}, a cycle is detected if $t \geq \min\{dist(s,x),dist(s,y)\} + \max\{w(P_1),w(P_2)\}$.
Indeed, in this situation there exist two different paths of length $\leq t$ between $s$ and one of $x$ or $y$.
In particular, set $dist(s,x) = dist(s,C)$ and let $y \in V(C) \setminus \{x\}$ be arbitrary.
We have $t_0 \leq dist(s,C) + \max\{w(P_1),w(P_2)\} \leq dist(s,C) + w(C)$.
\end{proof}

We end up improving the bound of Corollary~\ref{cor:stop-hbd-1} in some particular cases of interest. The following refined upper-bound will play a key role in the analysis of our algorithm in the next sections.

\begin{corollary}\label{cor:stop-hbd-2}
Given $G=(V,E,w)$, let $s \in V$ and let $C$ be a cycle.
Assume the existence of a vertex $x \in V(C)$ such that $\max_{v \in V(C)} dist_C(x,v) \geq dist_G(s,x) > 0$. Then, the minimum $t_0$ such that \texttt{HBD}$(G,s,t_0)$ detects a cycle satisfies $t_0 \leq w(C)$.
\end{corollary}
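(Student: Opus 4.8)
The plan is to mimic the argument of Corollary~\ref{cor:stop-hbd-1}, but to exploit the extra hypothesis $\max_{v \in V(C)} dist_C(x,v) \ge dist_G(s,x) > 0$ in order to choose the splitting vertex $y$ on $C$ more cleverly, so that the upper bound on $t_0$ becomes $w(C)$ instead of $dist_G(s,x) + w(C)$. Recall from the proof of Corollary~\ref{cor:stop-hbd-1} that, by the contrapositive of Lemma~\ref{lem:non-detect-cond}, the algorithm $\texttt{HBD}(G,s,t)$ detects a cycle as soon as
\[
t \;\ge\; \min\{dist_G(s,x), dist_G(s,y)\} + \max\{w(P_1), w(P_2)\},
\]
where $P_1, P_2$ are the two $xy$-arcs of $C$. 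So it suffices to find some $y \in V(C)$ for which the right-hand side is at most $w(C)$.

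First I would pick $y$ to be a vertex of $C$ that is antipodal to $x$ along the cycle, i.e. a vertex maximizing $dist_C(x,v)$ over $v \in V(C)$; by hypothesis this maximum is $\ge dist_G(s,x)$. Partition $C$ into the two $xy$-arcs $P_1, P_2$, and assume without loss of generality $w(P_1) \le w(P_2)$, so that $\max\{w(P_1), w(P_2)\} = w(P_2) = w(C) - w(P_1)$. The key observation is that $w(P_1) \ge dist_C(x,y) \ge dist_G(s,x)$: indeed $P_1$ is an $xy$-path inside $C$, hence its weight is at least the distance $dist_C(x,y)$ between its endpoints within $C$, which by our choice of $y$ is at least $dist_G(s,x)$. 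Plugging this in, and using $\min\{dist_G(s,x), dist_G(s,y)\} \le dist_G(s,x)$, we get
\[
\min\{dist_G(s,x), dist_G(s,y)\} + \max\{w(P_1), w(P_2)\}
\;\le\; dist_G(s,x) + w(C) - w(P_1)
\;\le\; w(C).
\]
Therefore $\texttt{HBD}(G,s,w(C))$ already detects a cycle, which gives $t_0 \le w(C)$.

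The one point that needs a little care — and is the main (mild) obstacle — is the inequality $w(P_1) \ge dist_C(x,y)$: I should make sure the "$dist_C$" in the hypothesis is interpreted as the distance in the weighted graph obtained by restricting to the cycle $C$, in which case the two $xy$-arcs $P_1, P_2$ of $C$ are precisely the two $xy$-paths, and $dist_C(x,y) = \min\{w(P_1), w(P_2)\} = w(P_1)$ under our WLOG assumption. With that reading the bound is immediate, and nothing else in the argument is delicate; the strict positivity $dist_G(s,x) > 0$ is only used to remain in the regime where the hypothesis is non-vacuous (and is inherited, as needed, when this corollary is applied later). A degenerate case, $w(C)=0$ or $x=y$, forces $dist_G(s,x)=0$ and is excluded by hypothesis, so it need not be treated.
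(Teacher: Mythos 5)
Your proposal is correct and follows essentially the same route as the paper: both rely on the detection criterion $t \geq \min\{dist_G(s,x),dist_G(s,y)\} + \max\{w(P_1),w(P_2)\}$ extracted from Lemma~\ref{lem:non-detect-cond}, split $C$ into the two $xy$-arcs, and absorb $dist_G(s,x)$ into the lighter arc via $dist_G(s,x) \leq dist_C(x,y) = \min\{w(P_1),w(P_2)\}$. The only difference is cosmetic: you take $y$ to be a vertex of $C$ farthest from $x$ (directly invoking the hypothesis), whereas the paper takes $y$ to be the nearer exit vertex of the ball $B_x = \{v \in V(C) \mid dist_C(x,v) < dist_G(x,s)\}$; your choice, if anything, streamlines the same argument.
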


\begin{center}
\begin{tikzpicture}
\draw node[scale=.5,fill,circle,label=90:$s$] at (0,1) {};
\draw node[scale=.5,fill,circle,label=45:$x$] at (0,0) {};
\draw node[scale=.5,fill,circle,label=180:$u$] at (-1.5,-1) {};
\draw node[scale=.5,fill,circle,label=180:$y$] at (-1.5,-1.5) {};
\draw node[scale=.5,fill,circle,label=0:$v$] at (1.5,-1) {};
\draw node[scale=.5,fill,circle,label=0:$z$] at (1.5,-1.5) {};
\draw[thick] (0,0) -- (-1.5,-1) -- (-1.5,-1.5);
\draw[snake it] (0,1) -- (0,0);
\draw[dashed,ultra thick] (0,0) -- (1.5,-1) -- (1.5,-1.5) -- (-1.5,-1.5);
\end{tikzpicture}
\end{center}

\begin{proof}
Let $B_x = \{ v \in V(C) \mid dist_C(x,v) < dist_G(x,s) \}$, and let $P_x$ be a shortest-path tree of $C[B_x]$ rooted at $x$.
By construction, $P_x$ is a path such that $w(P_x) < 2 \cdot dist_G(x,s)$.
Furthermore since we assume $\max_{v \in V(C)} dist_C(x,v) \geq dist_G(x,s)$, $B_x \neq V(C)$.
Hence there exist $uy,vz \in E(C)$ such that $u,v \in B_x$ but $y,z \notin B_x$ (possibly, $y=z$ or $u=v=x$, but not both at the same time).
W.l.o.g. $dist_C(x,y) \leq dist_C(x,z)$.
We can bipartition $E(C)$ in two edge-disjoint $xy$-paths $P_1$ and $P_2$, with $P_1$ being the $xy$-subpath passing by $vz$ (and so, $P_2$ is the other $xy$-subpath passing by $uy$).
Note that it implies $dist_C(x,y) = w(P_2) \leq w(C)/2$.
Then, by Lemma~\ref{lem:non-detect-cond} we have $t_0 \leq dist_G(s,x) + \max\{w(P_1),w(P_2)\} = dist_G(s,x) + w(P_1) \leq dist_C(x,y) + w(P_1) = w(P_2) + w(P_1) = w(C)$.
\end{proof}

    \subsection{Subquadratic-time approximation}\label{sec:main}

We are now ready to prove our first main result in this section.

\mainBounded*

\begin{proof}
We analyse the following \texttt{Subquadratic-Approx} algorithm:

\begin{figure}[!ht]
    \centering
    \begin{subfigure}{.4\textwidth}
        \caption{\texttt{Approx-Short-Close-Cycle}$(G,s,M)$}
        \centering
        \begin{algorithmic}[1]\small
            \STATE Find the minimum $t \in [3; M \cdot |V(G) | ]$ such that: ${\tt HBD}(G,s,t)$ detects a cycle.
            \STATE Let $C_s$ be the shortest cycle we so computed.
             \RETURN $C_s$.
        \end{algorithmic}
     \end{subfigure}
     \begin{subfigure}{.45\textwidth}
        \caption{\texttt{Subquadratic-Approx}$(G,M)$}
        \centering
        \begin{algorithmic}[1]\small
            \STATE Let $S$ and $(B_S(v))_{v \in V}$ be as in Proposition~\ref{prop:good-set}.
            \FORALL{$s \in S$}
            \STATE $C_s \gets {\tt Approx-Short-Close-Cycle}(G,s,M)$
            \ENDFOR
            \STATE
            \FORALL{$v \notin S$}
            \STATE Let $G'_v$ be induced by $B_S(v)$.
            \STATE $C_v \gets {\tt Approx-Short-Close-Cycle}(G'_v,v,M)$
            \ENDFOR
            \STATE
            \RETURN a shortest cycle in $\{ C_v \mid v \in V \}$.
        \end{algorithmic}
     \end{subfigure}
\end{figure}
The algorithm starts precomputing a set $S \subseteq V$ and the open balls $(B_S(v))_{v \in S}$ as described in Proposition~\ref{prop:good-set}. 
This takes time $\tilde{\cal O}(n^{5/3})$, plus an additional preprocessing time in $\tilde{O}(m)$ for sorting the adjacency lists.
Then, we process the vertices in $S$ and those in $V \setminus S$ separately:
\begin{itemize}
\item
For every $s \in S$, we compute the smallest $t_s \in [3;Mn]$ such that ${\tt HBD}(G,s,t_s)$ detects a cycle by using a dichotomic search (procedure {\tt Approx-Short-Close-Cycle}$(G,s,M)$).
We store the cycle $C_s$ outputted by ${\tt HBD}(G,s,t_s)$.
Since each test we perform during the dichotomic search consists in a call to the {\tt HBD}-algorithm, this takes time $\tilde{\cal O}(n\log{M})$ per vertex in $S$, and so, $\tilde{\cal O}(n|S|\log{M}) = \tilde{\cal O}(n^{5/3}\log{M})$ in total.


\item We now consider the vertices $v \in V \setminus S$ sequentially.
Let $G'_v$ be the subgraph of $G$ induced by the open ball $B_S(v)$.
By Lemma~\ref{lem:comute-subgraph}, this subgraph can be computed in time $\tilde{\cal O}(|B_S(v)|^2) = \tilde{\cal O}(n^{2/3})$ -- assuming a preprocessing of the graph in time ${\cal O}(m)$ for ordering the adjacency lists.
We apply the same procedure as for the vertices in $S$ but, we restrict ourselves to the ball $B_S(v)$.
That is, we call {\tt Approx-Short-Close-Cycle}$(G_v',v,M)$, and we denote by $C_v$ the cycle outputted by this algorithm.
Since we restrict ourselves to a subgraph of order ${\cal O}(n^{1/3})$, this takes total time $\tilde{\cal O}(n*(n^{2/3}+n^{1/3}\log{M})) = \tilde{\cal O}(n^{5/3}\log{M})$.
%
%
\end{itemize}

Let $C \in \{ C_v \mid v \in V \}$ be of minimum weight.
We claim that $w(C)$ is a $2$-approximation of the girth of $G$, that will end proving the theorem.
In order to prove this claim, we apply the following case analysis to some arbitrary shortest cycle $C_0$ of $G$.
\begin{itemize}
\item
If $V(C_0) \cap S \neq \emptyset$ then, let $C_S$ be a shortest cycle among $\{ C_s \mid s \in S \}$.
We prove as a subclaim that $w(C_S)$ is at most twice the weight of a shortest cycle intersecting $S$.
In order to prove this subclaim, it suffices to prove that for every $s \in S$, we compute a cycle $C_s$ of weight no more than twice the weight of a shortest cycle passing by $s$.
By Corollary~\ref{cor:stop-hbd-1}, if $t_s$ is the smallest $t$ such that $HBD(G,s,t)$ detects a cycle then, a shortest cycle $C$ passing by $s$ must have weight $\geq t_s$.
Furthermore, by Lemma~\ref{lem:stop-output} we get $w(C_s) \leq 2t_s$, thereby proving the subclaim. 
Thus, $w(C_S) \leq 2w(C_0)$ if $V(C_0) \cap S \neq \emptyset$.

\item
From now on we assume $V(C_0) \cap S = \emptyset$.
\smallskip
Let $v \in V(C_0)$ be arbitrary. There are two subcases:
\begin{itemize}
\item If $V(C_0) \subseteq B_S(v)$ then, $C_0$ is also a cycle in $G'_v$.
Moreover by Corollary~\ref{cor:stop-hbd-1} applied for $dist(v,C_0) = 0$, the smallest $t_v$ such that {\tt HBD}$(G'_v,v,t_v)$ detects a cycle satisfies $t_v \leq w(C_0)$.
By Lemma~\ref{lem:stop-output}, $w(C) \leq w(C_v) \leq 2w(C_0)$.
\item Otherwise $V(C_0) \not\subseteq B_S(v)$. 
This implies that we have: $$\max_{u \in V(C_0)} dist_{C_0}(u,v) \geq \max_{u \in V(C_0)} dist_{G}(u,v) \geq dist_G(v,S) > 0.$$
Furthermore, let $s \in S$ minimize $dist_G(s,v)$.
Then, by Corollary~\ref{cor:stop-hbd-2}, the smallest $t_s$ such that ${\tt HBD}(G,s,t_s)$ detects a cycle satisfies $t_s \leq w(C_0)$.
As a result, by Lemma~\ref{lem:stop-output} $w(C) \leq w(C_s) \leq 2w(C_0)$.
\end{itemize}
\end{itemize}
Summarizing, $w(C) \leq 2w(C_0)$ in all the cases.

\end{proof}

%

\section{Generalization to unbounded weights}\label{sec:big-weights}

In~\cite{RoT13}, Roditty and Tov first proposed a $4/3$-approximation for {\sc Girth} in graphs with bounded integer weights.
Then, they showed how to turn their algorithm into a $(4/3+\varepsilon)$-approximation for graphs with arbitrary non negative real weights.
We build on their approach in order to derive  Theorem~\ref{thm:main-unbounded-weight}.
In particular, we will use their main result as a subroutine:

\begin{theorem}[~\cite{RoT13}]\label{thm:previous-approx}
For every $G=(V,E,w)$ with arbitrary non negative edge-weights, we can compute a $(4/3+\varepsilon)$-approximation for {\sc Girth} in time $\tilde{\cal O}(n^2/\varepsilon)$.
\end{theorem}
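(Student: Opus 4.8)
The plan is to prove the bound in two stages: first a $4/3$-approximation for graphs with integer weights in $\{1,\dots,M\}$ running in $\tilde{\cal O}(n^2\polylog M)$ time, and then a reduction from arbitrary non-negative weights to that case that costs only an extra factor $1/\varepsilon$. (This essentially reconstructs the argument of~\cite{RoT13}.)

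For the bounded-integer-weight stage, I would run from every vertex $s$ a truncated single-source shortest-path exploration of the \texttt{HBD} type from Section~\ref{sec:hbd} and, via a dichotomic search on the threshold, record the lightest cycle it ever reports. The crude bound of Lemma~\ref{lem:stop-output} only gives a detected cycle of weight $\le 2t$, hence a $2$-approximation; upgrading to $4/3$ requires a finer analysis against a fixed shortest cycle $C_0$, $w(C_0)=g$, with heaviest edge $e^\ast=xy$. If $w(e^\ast)\le g/3$, then for any $s\in V(C_0)$ one can cut $C_0$ at a vertex $u$ so that the two $su$-arcs $P_1,P_2$ are balanced up to $w(e^\ast)$, i.e.\ $\max\{w(P_1),w(P_2)\}\le \tfrac{g}{2}+\tfrac{w(e^\ast)}{2}\le\tfrac{2g}{3}$; by the contrapositive of Lemma~\ref{lem:non-detect-cond} (as in Corollary~\ref{cor:stop-hbd-1} with $dist(s,C_0)=0$) the exploration from $s$ reports a cycle once the threshold reaches $\tfrac{2g}{3}$, and Lemma~\ref{lem:stop-output} bounds that cycle by $\tfrac{4g}{3}$. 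The remaining case $w(e^\ast)>g/3$ is the technical heart: there the per-source exploration alone only yields factor $2$, since the two $xy$-paths $e^\ast$ and $C_0-e^\ast$ are badly unbalanced, so one must add a separate routine that, exploiting the presence of a dominant edge, recovers a cycle of weight $\le\tfrac{4g}{3}$ with few shortest-path computations --- e.g.\ by running a shortest-path computation from (a superset of) the endpoints of the heavy candidate edges, or via a hitting-set argument over such endpoints. Reconstructing and charging this routine correctly within $\tilde{\cal O}(n^2)$ total time is where I expect the main difficulty.

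For the arbitrary-weight stage, I would reduce to the previous one by guessing $g$ up to a $(1+\varepsilon)$ factor. First handle girth $0$ (a cycle of zero-weight edges) by a union--find pass, and contract each connected component of the zero-weight subgraph so that all surviving weights are positive and the girth is unchanged. Then localize $g$ to a polynomial window via a crude $n^{\cal O(1)}$-approximation computable in $\tilde{\cal O}(n^2)$ time (the technique alluded to in~\cite{RoT13}), and sweep $\hat g$ over a geometric grid $\{(1+\varepsilon)^i\}$ inside that window --- $\mathcal O(\tfrac{\log n}{\varepsilon})$ values. For a fixed $\hat g$: delete every edge of weight $>\hat g$ (none lies on a cycle of weight $\le\hat g$, in particular not on $C_0$ when $\hat g\ge g$), and replace each surviving weight $w_e$ by $w'_e=\lceil w_e\, n/(\varepsilon\hat g)\rceil$, an integer in $\{1,\dots,\mathcal O(n/\varepsilon)\}$. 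Since a simple cycle has at most $n$ edges, rescaling any cycle's weight back by $\varepsilon\hat g/n$ overshoots its true weight by at most $\varepsilon\hat g$ additively; hence for the grid point with $\hat g\in[g,(1+\varepsilon)g)$ the bounded-weight algorithm returns a genuine cycle of $G$ of weight at most $\tfrac43(1+\mathcal O(\varepsilon))\,g$. Output the lightest cycle over all grid points and rescale $\varepsilon$. The cost is $\mathcal O(\tfrac{\log n}{\varepsilon})$ calls to the bounded-weight algorithm on a graph with weights $\mathcal O(n/\varepsilon)$, i.e.\ $\tilde{\cal O}(n^2)$ each, for a total of $\tilde{\cal O}(n^2/\varepsilon)$.

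The scaling reduction and the balanced sub-case of the first stage are routine; the genuinely hard step is the dominant-edge sub-case --- getting the $4/3$ factor rather than the easy $2$ when a shortest cycle is dominated by one heavy edge, while keeping the per-source cost $\tilde{\cal O}(n)$. A secondary subtlety is that the second stage must make the girth grid simultaneously finite and fine enough without knowing $g$ in advance, which is precisely why a cheap crude estimate of $g$ is needed as preprocessing.
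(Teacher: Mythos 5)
You should first note that the paper does not prove this statement at all: Theorem~\ref{thm:previous-approx} is imported verbatim from~\cite{RoT13} and used purely as a black-box subroutine (in Proposition~\ref{prop:linear-approx}, Theorem~\ref{thm:main-unbounded-weight} and Proposition~\ref{prop:unweighted}), so there is no in-paper argument to compare against; any assessment has to be of your reconstruction on its own terms.

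As a reconstruction, your proposal has a genuine gap exactly where you yourself flag it. The balanced case ($w(e^\ast)\le g/3$, giving threshold $2g/3$ and a detected cycle of weight $\le 4g/3$ via the contrapositive of Lemma~\ref{lem:non-detect-cond} and Lemma~\ref{lem:stop-output}) and the scaling/grid reduction from arbitrary non-negative weights to integer weights are fine in outline. But the dominant-edge case $w(e^\ast)> g/3$ is precisely where the factor $4/3$ is earned over the easy factor $2$, and there you only gesture at possibilities (``a shortest-path computation from (a superset of) the endpoints of the heavy candidate edges, or via a hitting-set argument'') without specifying the routine, proving it returns a cycle of weight $\le \tfrac43 g$, or charging its cost within $\tilde{\cal O}(n^2)$. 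Without that step the argument delivers only a $2$-approximation, i.e.\ the Lingas--Lundell level already used elsewhere in this paper, not the claimed theorem. Note also that the actual Roditty--Tov algorithm is not merely per-source truncated explorations: as Section~\ref{sec:big-weights} of this paper remarks, it contains global steps such as approximating the girth of a sparse spanner, which your sketch omits. A secondary, smaller gap: the ``crude $n^{{\cal O}(1)}$-approximation computable in $\tilde{\cal O}(n^2)$ time'' that you need to make the geometric grid finite is asserted rather than constructed, and obtaining such an estimate cheaply for arbitrary weights is itself non-trivial (it is the analogue, in the quadratic regime, of what Proposition~\ref{prop:linear-approx} is built for in the subquadratic regime). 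So the proposal is an honest plan, but not a proof.
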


The remaining of this section is divided into three parts.
In Section~\ref{sec:rk-girth}, we reduce the general problem to the subcase of graphs with {\em positive integer} weights.
Although strictly speaking, this part is not necessary in order to prove our results, it helps in simplifying some arguments and may be of independent interest for the further investigations on the girth.
Furthermore, our reduction from non negative weights to positive weights is a gentle introduction to the technique that we exploit in the second part of this section for the design of a {\em polynomial-factor} approximation of the girth in subquadratic time.
This part is new compared to~\cite{RoT13} and the techniques used are interesting in their own right.
Then, based on a clever technique from~\cite{RoT13}, we end up refining this rough estimate of the girth until we obtain a constant-factor approximation.

\subsection{Reduction to positive integer weights}\label{sec:rk-girth}

In what follows are some reductions from the graphs with arbitrary non negative real weights to the graphs with positive integer (unbounded) weights.
First we assume all the weights to be positive (we will show how to deal with the edges of weight $0$ at the end of this section).
Recall that an $(\alpha,\beta)$-approximation algorithm for the girth is one returning a cycle of weight no more than $\alpha \cdot g^* + \beta$ for any graph with girth $\leq g^*$.
Our first observation is that since we are dealing with weighted graphs, given an $(\alpha,\beta)$-approximation algorithm for the girth we can always scale the weights in order to obtain an $(\alpha + o(1))$-approximation.

\begin{lemma}\label{lem:remove-constant}
Assume there exists an $T(n,m)$-time $(\alpha,\beta)$-approximation algorithm for {\sc Girth} for the graphs with positive real edge-weights, where $\alpha,\beta$ do not depend on the weights and $T(n,m) = \Omega(m)$.
Then, for every $\varepsilon > 0$, there also exists an ${\cal O}(T(n,m)\log{1/\varepsilon})$-time $(\alpha+\varepsilon)$-approximation algorithm for the girth.
\end{lemma}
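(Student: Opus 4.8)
The plan is to use binary search to first obtain a rough estimate of the girth, then rescale the weights so that the additive term $\beta$ becomes negligible compared to $\alpha g^*$. First I would run the given $(\alpha,\beta)$-approximation algorithm once on $G$ as a black box; it returns some cycle of weight $W$ with $g^* \le W \le \alpha g^* + \beta$. This already localizes $g^*$ to within a bounded factor: writing $g_0 := W/(\alpha)$ after noting $W \le \alpha g^* + \beta$ and $g^* \ge$ (the minimum edge weight, which we can read off in $O(m)$ time), we get a window $[\,\ell, u\,]$ with $u/\ell$ bounded by something like $\alpha + \beta/\ell$. Actually the cleanest route: let $w_{\min}$ be the minimum edge weight (found in $O(m)$ time) and $w_{\max}$ the maximum; then $2w_{\min} \le g^* \le $ something, and the ratio of the crude upper to lower bound is polynomial in $n$ and in $w_{\max}/w_{\min}$, which is not yet good enough — so instead I would iterate. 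Perform $O(\log(1/\varepsilon))$ rounds of the following: given a current candidate value $\lambda$ believed to satisfy $g^* \in [\lambda, c\lambda]$ for a constant $c$, rescale every edge weight by dividing by $\lambda \varepsilon / (\text{const})$ and rounding, run the $(\alpha,\beta)$-algorithm on the rounded instance, and use the output to halve (in log-scale) the uncertainty window.

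The key computation is the rescaling step. Suppose we know $g^* \in [L, U]$. Define new weights $w'_e := w_e \cdot N / U$ where $N$ is a large integer parameter (to be chosen as a suitable $\mathrm{poly}(1/\varepsilon)$-free or $\mathrm{poly}(1/\varepsilon)$ quantity — here we do not even need integrality, just that $\beta$ is additive and weight-independent). On $G' := (V,E,w')$ the girth is $g'^* = g^* N / U \le N$, and running the $(\alpha,\beta)$-algorithm on $G'$ returns a cycle $C$ with $w'(C) \le \alpha g'^* + \beta$. Translating back, $w(C) = w'(C)\cdot U/N \le \alpha g^* + \beta U / N$. Choosing $N \ge \beta U/(\varepsilon L) \ge \beta/\varepsilon$ — but $U/L$ may itself be large — forces us to first shrink $U/L$ to a constant, which is exactly what the iteration buys. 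Once $U/L \le 2$ (say), taking $N := \lceil 2\beta/\varepsilon \rceil$ gives $w(C) \le \alpha g^* + \varepsilon L \le (\alpha + \varepsilon) g^*$, as desired, and this final call costs one invocation of the algorithm, i.e.\ $O(T(n,m))$ time. The $\Omega(m)$ assumption on $T(n,m)$ absorbs the $O(m)$ overhead of reading and rescaling the weights in each round, and the $O(\log(1/\varepsilon))$ rounds give the claimed total running time $O(T(n,m)\log(1/\varepsilon))$.

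The main obstacle is getting the uncertainty window $[L,U]$ down from its initial (possibly exponential, i.e.\ depending on the bit-length of the weights) size to a constant ratio using only $O(\log(1/\varepsilon))$ calls. The trick is that each call to the $(\alpha,\beta)$-algorithm on a rescaled instance, where the scaling is chosen so the current candidate girth maps to a fixed value, returns enough information to geometrically contract the window — but one must be careful that a single call contracts the $\log$ of the window by a constant factor, which requires the scaling to be set up so that the additive error $\beta$ is a constant fraction of the rescaled girth. I would handle this by, at each round, rescaling so that the current lower bound $L$ maps to the value $\beta/\delta$ for a small constant $\delta$; then the returned cycle has rescaled weight at most $\alpha g'^* + \beta \le (\alpha + \delta) g'^*$ whenever $g^* \ge L$, and comparing this to the rescaled $U$ either certifies $g^*$ is in the lower half of $[L,U]$ (in log scale) or the upper half, halving $\log(U/L)$. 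After $O(\log\log(\text{initial ratio}))$ rounds the ratio is constant; since the initial ratio is at most $\mathrm{poly}(n) \cdot (w_{\max}/w_{\min})$ and the problem implicitly assumes weights given in binary (so $\log(w_{\max}/w_{\min})$ is polynomial in the input size), this is $O(\log(1/\varepsilon))$ only if we are slightly more careful — in fact the simplest fix, and the one I would adopt, is to note that we only ever need the final accuracy $\varepsilon$, so we can afford $O(\log(1/\varepsilon) + \log\log(w_{\max}/w_{\min}))$ rounds, and the paper's $\tilde{\cal O}$ notation hides the doubly-logarithmic weight dependence; alternatively, restrict attention (as the paper does elsewhere) to integer weights where a cruder argument with a single rescaling to a $\mathrm{poly}(n,1/\varepsilon)$-bounded integer instance suffices.
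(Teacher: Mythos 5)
Your core instinct --- rescale the weights so that the weight-independent additive term $\beta$ becomes an $\varepsilon$-fraction of the girth --- is exactly the paper's idea, and your own computation already contains the whole proof: with scale factor $\sigma$ the returned cycle has original weight at most $\alpha g^* + \beta/\sigma$, so all you need is $\beta/\sigma \leq \varepsilon g^*$, which requires only a \emph{lower} bound on $g^*$, not an upper bound and not a constant-ratio window. Such a lower bound is free: the graph is simple with positive weights, so every cycle has at least three edges and $g^* \geq 3 w_{\min}$, with $w_{\min}$ computable in ${\cal O}(m)$ time. Taking $\sigma = \beta/(3\varepsilon w_{\min})$ and making a \emph{single} call to the $(\alpha,\beta)$-algorithm gives weight at most $\alpha g^* + 3\varepsilon w_{\min} \leq (\alpha+\varepsilon)g^*$; the ${\cal O}(m)$ overhead is absorbed by $T(n,m)=\Omega(m)$. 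This is precisely the paper's proof. The step where you write ``but $U/L$ may itself be large --- forces us to first shrink $U/L$ to a constant'' introduces a non-existent obstacle: since $T(n,m)$, $\alpha$ and $\beta$ are all weight-independent, an enormous scale factor costs nothing in this model, and the quantity $U$ never needs to exist.

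That misconception is where the proposal stops proving the stated lemma. The window-shrinking phase you build to repair it needs a number of calls depending on $\log\log(w_{\max}/w_{\min})$, so for arbitrary positive real weights the total running time is not ${\cal O}(T(n,m)\log(1/\varepsilon))$ as claimed; you acknowledge this, but neither proposed fix rescues the statement: the extra factor depends on the weights rather than on $n$, so it is not hidden by the paper's $\tilde{\cal O}$ convention (and the lemma uses a plain ${\cal O}$ anyway), and restricting to integer weights changes the hypothesis --- indeed the lemma is invoked (via Corollary~\ref{cor:integer-weights}) exactly to pass from integer to real weights, so assuming integrality would defeat its purpose. Replace the entire iterative phase with the one-shot scaling by $\beta/(3\varepsilon w_{\min})$ and the argument is complete.
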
 

\begin{proof}
Given $G=(V,E,w)$, let $w_{\min} := \min \{ w_e \mid e \in E \}$ (computable in time ${\cal O}(m)$).
We multiply all the edge-weights by $\beta/(3\varepsilon w_{\min})$.
In doing so, the girth must be at least $g(G) \geq 3 \times \beta/(3\varepsilon) = (1/\varepsilon)\beta$, and so, we have $\alpha \cdot g(G) + \beta \leq (\alpha + \varepsilon)g(G)$.
\end{proof}

In~\cite{DKS17}, Dahlgaard et al. presented various subquadratic $(1 + \varepsilon, {\cal O}(1))$-approximation algorithms for the girth of {\em unweighted} graphs, for every $\varepsilon > 0$. 
We recall that $4/3$ is, under some complexity theoretic assumptions, the best possible approximation factor that one can get using a subcubic combinatorial algorithm~\cite{RoT13}.
Therefore by Lemma~\ref{lem:remove-constant} such combinatorial algorithms are unlikely to generalize to weighted graphs.

Another more straightforward consequence of Lemma~\ref{lem:remove-constant} is that we can always assume the weights to be (possibly very large) integers.
This might be useful for the further investigations on the girth.

\begin{corollary}\label{cor:integer-weights}
Assume there exists an $T(n,m)$-time $\alpha$-approximation algorithm for {\sc Girth} for the graphs with positive integer edge-weights, where $T(n,m) = \Omega(m)$.
Then, for every $\varepsilon > 0$, there exists an ${\cal O}(T(n,m)\log{1/\varepsilon})$-time $(\alpha+\varepsilon)$-approximation algorithm for {\sc Girth} for the graphs with positive real edge-weights.
\end{corollary}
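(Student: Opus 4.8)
The plan is to use the hypothesized $\alpha$-approximation algorithm $A$ (valid for positive integer weights) to build an \emph{additive}-error approximation valid for positive real weights, and then to clear the additive error by invoking Lemma~\ref{lem:remove-constant}.

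First I would round up. Given $G=(V,E,w)$ with positive real edge-weights, set $w'_e := \lceil w_e \rceil$ for every $e \in E$; this can be done in ${\cal O}(m)$ time and yields a graph $G'=(V,E,w')$ with positive integer weights. Running $A$ on $G'$ returns a cycle $C$ with $w'(C) \le \alpha\, g(G')$. To relate $g(G')$ to $g(G)=:g^*$, observe that $w_e \le w'_e \le w_e+1$ for every edge and that a simple cycle has at most $n$ edges; hence, if $C^*$ denotes a shortest cycle of $G$, then $C^*$ is still a cycle of $G'$ and $g(G') \le w'(C^*) \le w(C^*)+n = g^*+n$. Combining this with $w(C) \le w'(C)$ (since $w_e \le w'_e$ on $C$), the cycle $C$ returned by $A$ satisfies $w(C) \le \alpha\, g(G') \le \alpha g^* + \alpha n$. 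Since $T(n,m) = \Omega(m)$, this gives an ${\cal O}(T(n,m))$-time $(\alpha,\alpha n)$-approximation algorithm for {\sc Girth} on the graphs with positive real weights. The key point is that the additive term $\alpha n$ depends only on $n$, not on the edge-weights, so that the hypothesis of Lemma~\ref{lem:remove-constant} is met with $(\alpha,\beta)=(\alpha,\alpha n)$.

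Applying Lemma~\ref{lem:remove-constant} to this algorithm then gives, for every $\varepsilon>0$, an ${\cal O}(T(n,m)\log(1/\varepsilon))$-time $(\alpha+\varepsilon)$-approximation algorithm for {\sc Girth} on the graphs with positive real weights, which is exactly the claim. Unfolding that lemma, one simply rescales the input weights by $\alpha n/(3\varepsilon\, w_{\min})$, so that the girth becomes at least $3\cdot \alpha n/(3\varepsilon)=\alpha n/\varepsilon$ (a simple graph has no cycle with fewer than three edges), and therefore $\alpha g + \alpha n \le (\alpha+\varepsilon)g$ on the rescaled instance.

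I do not foresee a real obstacle; the only subtle point is precisely the one highlighted above — that integer rounding may only perturb cycle weights by an amount bounded \emph{independently of the weights} (here, by $n$), since otherwise Lemma~\ref{lem:remove-constant} could not be applied and the resulting blow-up would not be absorbable into an $(\alpha+\varepsilon)$ factor. It is also worth noting that we rely on $A$ accepting arbitrarily large integer weights, which is part of the hypothesis, and that rounding \emph{up} rather than down is what keeps all weights positive.
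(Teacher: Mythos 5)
Your proposal is correct and follows essentially the same route as the paper: round each weight up to $\lceil w_e \rceil$, observe the girth grows by at most $n$ so the returned cycle is an $(\alpha,\alpha n)$-approximation, and then invoke Lemma~\ref{lem:remove-constant} since the additive term $\alpha n$ is independent of the weights. Your write-up is in fact slightly more careful than the paper's (explicitly checking $w(C)\le w'(C)$ and unfolding the rescaling), but the argument is the same.
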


\begin{proof}
Replace every edge-weight $w_e, e \in E$ by $\left\lceil w_e \right\rceil$.
Since $w_e \leq \left\lceil w_e \right\rceil \leq w_e + 1$, the girth is increased by at most $n$.
In particular, using an $\alpha$-approximation algorithm for the graphs with integer weights, we output a cycle of weight at most $\alpha \cdot g(G) + \alpha n$, with $g(G)$ being the girth of $G$.
We can use Lemma~\ref{lem:remove-constant} in order to turn this result into an $(\alpha+\varepsilon)$-approximation for every $\varepsilon > 0$.  
\end{proof}

Finally, we reduce the general case of graphs with non negative weights to the subcase of graphs with positive weights (and so, also to the subcase of graphs with positive integer weights).
The technique that we use for that will be also used, in a more complicated way, in order to prove Proposition~\ref{prop:linear-approx} in the next section.

\begin{lemma}\label{remove-zero-weights}
Assume there exists an $T(n,m)$-time $\alpha$-approximation algorithm for {\sc Girth} for the graphs with positive edge-weights, where $T(n,m) = \Omega(m)$.
Then, there also exists an ${\cal O}(T(n,m))$-time $\alpha$-approximation algorithm for {\sc Girth} for the graphs with non negative edge-weights.
\end{lemma}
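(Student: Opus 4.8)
The plan is to eliminate the weight-zero edges by \emph{contracting} them, while keeping careful track of the loops and parallel edges that such a contraction creates. Let $G=(V,E,w)$ have non negative weights and put $E_0=\{e\in E\mid w_e=0\}$. First I would compute, in ${\cal O}(m)$ time, the connected components $V_1,\ldots,V_k$ of the spanning subgraph $(V,E_0)$. If for some $i$ the zero-weight edges inside $V_i$ do not form a tree, then $G$ has a cycle of weight $0$; we output it and stop. Hence we may assume each such subgraph is a tree $T_i$, so that any two vertices of the same class $V_i$ are joined by a unique zero-weight path inside $T_i$, and we may also assume $g(G)>0$.

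Next I would form the multigraph $G'$ on vertices $\{v_1,\ldots,v_k\}$ having, for every \emph{positive}-weight edge $e=xy$ of $G$ with $x\in V_i$ and $y\in V_j$, one edge $v_iv_j$ of weight $w_e$ (so $G'$ has a loop when $i=j$, and possibly parallel edges). After a linear-time bucket sort of the positive edges by the key $(\min\{i,j\},\max\{i,j\})$, a single scan produces: (a) the minimum weight $\mu_{\mathrm{loop}}$ of a loop of $G'$, which together with the $T_i$-path between its ends is a cycle of $G$ of weight $\mu_{\mathrm{loop}}$; (b) the minimum value $\mu_{\mathrm{par}}$ of $w_e+w_{e'}$ over two parallel edges of $G'$ between the same pair of classes, which lifts to a cycle of $G$ of that weight (the two positive edges and a zero-weight path in each of the two trees); and (c) the \emph{simple} graph $G''$ obtained from $G'$ by deleting all loops and keeping, between each pair of classes with at least one edge, one edge of minimum weight. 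All edge-weights of $G''$ are positive, $G''$ is connected because $G$ is, and $|V(G'')|\le n$, so I can run the assumed algorithm on $G''$ in $T(n,m)$ time, obtaining a cycle $C''$ with $w(C'')\le\alpha\,g(G'')$. Lifting each edge of $C''$ to a cheapest original edge between the corresponding classes and splicing in the appropriate zero-weight tree paths yields a closed walk of $G$ with pairwise distinct edges and of weight $w(C'')$; since distinct positive edges of $G$ never become parallel after lifting, this walk has no length-$2$ closed sub-walk, so in ${\cal O}(n)$ time I can extract from it a genuine cycle $C_{\mathrm{lift}}$ with $w(C_{\mathrm{lift}})\le w(C'')$. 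The algorithm returns the lightest of the cycles from (a), (b), (c); call $W^\star$ its weight.

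It remains to prove $g(G)=\min\{\mu_{\mathrm{loop}},\mu_{\mathrm{par}},g(G'')\}$. The inequality $g(G)\le\min\{\cdots\}$ holds because each of the three quantities is realised by an actual cycle of $G$, as exhibited above. For the converse, take a shortest cycle $C_0$ of $G$; it has at least one positive edge since $g(G)>0$. If it has exactly one, that edge is a loop of $G'$ and $w(C_0)\ge\mu_{\mathrm{loop}}$. If it has exactly two, minimality of $C_0$ forces both to be non-loops of $G'$ with the same pair of ends (otherwise a strictly lighter cycle would exist), so $w(C_0)\ge\mu_{\mathrm{par}}$. If it has at least three positive edges, their cyclic order gives a closed walk $W'$ of $G'$ with distinct edges and $w(W')=w(C_0)$, whose image $W''$ in $G''$ satisfies $w(W'')\le w(W')$: if $W''$ reuses an edge, that edge arises from two parallel copies in $G'$ and $w(C_0)\ge\mu_{\mathrm{par}}$; otherwise $W''$ is a closed walk in the simple graph $G''$ with distinct edges, hence contains a simple cycle of weight $\le w(W'')\le w(C_0)$, giving $g(G'')\le w(C_0)$. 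Thus $w(C_0)\ge\min\{\mu_{\mathrm{loop}},\mu_{\mathrm{par}},g(G'')\}$, so equality holds, and finally $g(G)\le W^\star\le\min\{\mu_{\mathrm{loop}},\mu_{\mathrm{par}},\alpha\,g(G'')\}\le\alpha\min\{\mu_{\mathrm{loop}},\mu_{\mathrm{par}},g(G'')\}=\alpha\,g(G)$ using $\alpha\ge1$.

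The step I expect to be the main obstacle is precisely this bookkeeping around contraction: arguing that loops and parallel edges of $G'$ are the \emph{only} short cycles that the contraction can hide, and carrying out the multigraph-to-simple-graph passage cleanly (in particular the closed-walk-to-cycle extraction, and ruling out spurious length-$2$ cycles). Everything else is routine: every step besides the single call to the assumed algorithm runs in ${\cal O}(m)$ time, and $T(n,m)=\Omega(m)$, so the total running time is ${\cal O}(T(n,m))$.
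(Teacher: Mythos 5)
Your proposal is correct and follows essentially the same route as the paper: detect zero-weight cycles, then split according to whether a shortest cycle meets one zero-weight component (your $\mu_{\mathrm{loop}}$), two (your $\mu_{\mathrm{par}}$, via the two cheapest edges between a pair), or at least three, the last case being handled by contracting the components into a simple graph with minimum-weight representative edges and running the assumed algorithm there. Your justification of the contraction step (closed-walk with distinct edges, then cycle extraction) is a slightly more explicit variant of the paper's argument that a shortest cycle meets each component in a path, but the decomposition and the algorithm are the same.
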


\begin{proof}
Let $G=(V,E,w)$ with non negative real edge-weights, and let $E_0 = \{ e \in E \mid w_e = 0 \}$.
If the (unweighted) graph $G_0 := (V,E_0)$ contains a cycle $C$, then $w(C) = 0$ is minimized and we can output $C$.
Otherwise, $G_0$ is a forest, and let $V_1,V_2,\ldots,V_p$ be its connected components.
We consider the following three different cases:
\begin{enumerate}
\item First we scan all the edges in $E \setminus E_0$ in order to find, for any component $V_i$, the minimum weight of an edge with its two ends in $V_i$ (if any).
It takes time ${\cal O}(m)$.
Furthermore, note that given $e \in E \setminus E_0$ with its two ends in $V_i$, there is a cycle of weight exactly $w_e$.
Let $C_0$ be the minimum-weight cycle we so computed during this step.

\item
In the same way, we can easily find a cycle $C_1$ of minimum weight that only intersects two components: for that, we scan all the edges in $E \setminus E_0$ in order to find, for any two distinct $V_i,V_j$, the at most two edges of minimum-weight with one end in $V_i$ and the other end in $V_j$.
It takes time ${\cal O}(m)$.

\item
We are now left with approximating the short cycles that intersect at least three components of $G_0$.
For that, let $G'$ be obtained from $G$ by contracting each component $V_j$ into one vertex; for every distinct $V_i,V_j$, if there exists an edge $e \in E \cap (V_i \times V_j)$ then, we choose $e$ minimizing $w_e$ and we set $w'_{v_iv_j} = w_e$ in $G'$.
We observe that if there exists a shortest cycle $C$ of $G$ that intersects at least three components $V_j$ then, $C \cap V_i$ is either empty or induces a path in $G_0[V_i]$ for every $i$ (otherwise, there would be two vertices in $V_i$ that are connected in $C$ by two subpaths of positive weight, and so, we could obtain a cycle of smaller weight than $C$ by replacing any of these subpaths with any path in $G_0[V_i]$).
In particular, it corresponds to $C$ a cycle $C'$ in $G'$ such that $w(C') \leq w(C)$.
Conversely, to any cycle $C'$ in $G$, it corresponds a cycle $C$ in $G$ such that $w(C) \leq w(C')$ (obtained by uncontracting the connected components of $G_0$).
Let $C_2'$ be a cycle of weight at most $\alpha$ times the girth of $G'$, and let $C_2$ be a corresponding cycle in $G$.
\end{enumerate}
Then, by outputting a cycle among $C_0,C_1,C_2$ that is of minimum weight, one obtains an $\alpha$-approximation for the girth of $G$.
\end{proof}

\subsection{A polynomial-factor approximation.}

We now obtain an approximation of the girth that only depends on the order of the graph.
Since we consider weighted graphs, this is already a non trivial and interesting achievement.

\begin{proposition}\label{prop:linear-approx}
For every $G=(V,E,w)$ with arbitrary positive edge-weights, we can compute an $\tilde{\cal O}(n^{2/3})$-approximation for {\sc Girth} in time $\tilde{\cal O}(n^{5/3} + m)$.
\end{proposition}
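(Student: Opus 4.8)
The plan is to reuse the \texttt{HBD}-based machinery of Section~\ref{sec:algo}, but replace the ``dichotomic search over $t \in [3; Mn]$'' step --- which is what introduced the $\log M$ factor and implicitly assumed bounded weights --- by a search over a short list of \emph{candidate girth values} that we extract from the structure of the graph itself. Concretely, first I would compute the set $S$ and the open balls $(B_S(v))_{v \in V}$ of Proposition~\ref{prop:good-set}, exactly as in the proof of Theorem~\ref{thm:main-bounded-weight}; this costs $\tilde{\cal O}(n^{5/3} + m)$. The key new ingredient is to produce, in subquadratic time, a set $W$ of $\tilde{\cal O}(n^{\cal O(1)})$ real numbers that is guaranteed to contain some value $t^\star$ with $g(G) \le t^\star \le \mathrm{poly}(n)\cdot g(G)$; then running \texttt{Approx-Short-Close-Cycle} with the search restricted to $W$ (so $\tilde{\cal O}(\log|W|) = \tilde{\cal O}(\log n)$ calls to \texttt{HBD} per source) gives the claimed running time, and Lemma~\ref{lem:stop-output} together with Corollaries~\ref{cor:stop-hbd-1} and~\ref{cor:stop-hbd-2} gives the approximation guarantee, via \emph{the same case analysis} as in Theorem~\ref{thm:main-bounded-weight}: either $C_0$ meets $S$ (handled by a source $s \in S$), or $V(C_0)\subseteq B_S(v)$ for $v\in V(C_0)$ (handled inside $G'_v$), or we invoke Corollary~\ref{cor:stop-hbd-2} at the nearest $s\in S$ to some $v\in V(C_0)$.

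Next I would argue that the candidate set $W$ can be built cheaply. The idea is to sort the multiset of edge-weights, and for the purpose of a $\mathrm{poly}(n)$-approximation it suffices to keep a \emph{geometrically spaced} subsample: the girth lies between $w_{\min}$ and $n\cdot w_{\max}$, and the ratio $n w_{\max}/w_{\min}$ may be astronomically large, but we do not need to cover it with $O(\log(\text{ratio}))$ scales --- rather, we only need $O(1)$ distinct ``scales that actually appear,'' one near each distinct order of magnitude of an edge-weight. More precisely, a shortest cycle $C_0$ has $w(C_0) \le n\cdot \max_{e\in C_0} w_e$ and $w(C_0)\ge \max_{e\in C_0} w_e$, so for \emph{each} edge $e$, the value $n w_e$ is an upper bound valid for every cycle through $e$ and within a factor $n$ of the girth of the subgraph $G_{\le w_e}$ obtained by deleting all edges heavier than $w_e$. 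Running the whole \texttt{HBD} pipeline once for each of the $\le n$ choices of ``threshold weight $w_e$'' (using $t = n w_e$) and taking the best cycle found would already work, but costs an extra factor $n$. To avoid that, I would reuse the \textbf{Bondy--Simonovits / density} observation (alluded to in the introduction and used in Section~\ref{sec:remove-edges}): after fixing a scale, only $\tilde{\cal O}(n^{1+o(1)})$ edges per ``weight band'' can matter before a short cycle is forced, so one can afford to process only $\tilde{\cal O}(\log n)$ carefully chosen thresholds --- namely, bucket the distinct edge-weights into $\tilde{\cal O}(\log n)$ groups by rounding $\log w_e$, pick one representative threshold per nonempty group, and form $W := \{\, 3,\ n\cdot w^{(1)},\ n\cdot w^{(2)},\dots\,\}$ over these representatives. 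Every cycle is then ``caught'' at the representative threshold of its heaviest edge's group, at a cost of at most an extra $\mathrm{poly}(n)$ factor from the rounding and the $n$ in $n w_e$, which is exactly the slack Proposition~\ref{prop:linear-approx} allows.

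The main obstacle I anticipate is controlling the interaction between the \emph{threshold truncation} (keeping only edges of weight $\le$ some representative) and the \emph{locality} arguments of Corollaries~\ref{cor:stop-hbd-1}--\ref{cor:stop-hbd-2}: when we delete heavy edges, distances $dist_G(s,x)$ can only grow, which is fine for ``$t_0 \le \cdots$'' upper bounds but one must check that the ball $B_S(v)$ and the nearest-vertex $s\in S$ computed \emph{in the original $G$} still serve their purpose in the truncated graph, or else recompute the hitting set per scale (affordable since there are only $\tilde{\cal O}(\log n)$ scales, for a total of $\tilde{\cal O}(n^{5/3}\log n)$). I would resolve this by recomputing, for each of the $\tilde{\cal O}(\log n)$ representative thresholds $\tau$, the set $S_\tau$ and balls $B_{S_\tau}(v)$ inside $G_{\le \tau}$ via Proposition~\ref{prop:good-set}, then running the Theorem~\ref{thm:main-bounded-weight} pipeline on $G_{\le\tau}$ with $M$ replaced by the single value $\tau$ (so the $\polylog M$ factor becomes $\tilde{\cal O}(1)$ because the search range $[3;n\tau]$ is only queried at the $\tilde{\cal O}(1)$ relevant dyadic points, or simply absorbed into the $\tilde{\cal O}(\cdot)$ as a $\log(n\tau)=\mathrm{polylog}\,n$ --- here I should be careful, since $\tau$ itself may be huge, so the honest statement is that we query \texttt{HBD}$(G_{\le\tau},s,t)$ for $t$ ranging over the $O(\log n)$ values $n\tau/2^i$, giving $O(\log n)$ \texttt{HBD} calls and no $\log\tau$ dependence). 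Taking, over all thresholds $\tau$ and all sources, the lightest cycle ever reported by \texttt{HBD} yields a cycle of weight at most $\mathrm{poly}(n)$ times the girth, in total time $\tilde{\cal O}(n^{5/3}+m)$, which is the proposition.
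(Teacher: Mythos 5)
There is a genuine gap, in fact two. First, your running-time analysis rests on the claim that the distinct edge-weights fall into $\tilde{\cal O}(\log n)$ groups after rounding $\log w_e$, so that only $\tilde{\cal O}(\log n)$ threshold graphs $G_{\le\tau}$ (each with its own hitting set $S_\tau$ and full \texttt{HBD} pipeline) need to be processed. For arbitrary positive real weights this is false: the weights $2^{2^1},2^{2^2},\ldots,2^{2^m}$ occupy $m$ pairwise distinct scales, so the number of nonempty groups can be $\Theta(m)$, and re-running an $\tilde{\cal O}(n^{5/3})$ pipeline per group is hopeless. The Bondy--Simonovits/density fact you invoke does not rescue this: it bounds how many edges an unweighted graph can have before it contains a short cycle (and is used in Section~\ref{sec:remove-edges} for exactly that purpose); it says nothing about how many distinct weight scales ``actually appear'' among the edges. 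Second, even granting the time bound, your guarantee is too weak for the statement. Catching a shortest cycle $C_0$ at the threshold of its heaviest edge $w^\star$ and testing $t=n\,w^\star$ (or $n\tau$ for a nearby representative $\tau$) only certifies a cycle of weight ${\cal O}(n)\cdot w(C_0)$, since all you know is $w^\star \le w(C_0)\le n\,w^\star$. You say a $\mathrm{poly}(n)$ factor ``is exactly the slack the proposition allows,'' but the proposition claims an $\tilde{\cal O}(n^{2/3})$-approximation, and nothing in your scheme produces the exponent $2/3$. (For the downstream use in Theorem~\ref{thm:main-unbounded-weight} a crude factor-$n$ bound would indeed suffice --- and there is a much simpler way to get it, namely Kruskal-style insertion of edges by increasing weight until the first cycle closes --- but that is a weaker statement than the one you were asked to prove.)

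The paper obtains the factor $\tilde{\cal O}(n^{2/3})$ by a genuinely different mechanism, with no per-scale thresholding at all: after handling, via Theorem~\ref{thm:previous-approx} on each small ball $B_S(v)$, the cycles that are short relative to $dist(v,S)$, it partitions $V$ into shortest-path trees $(T_s)_{s\in S}$ rooted at the hitting set, gets constant-factor estimates for shortest cycles lying in one or two trees from the non-tree and cross edges, and then builds an auxiliary weighted graph $H_S$ on $S$ (edge weights $\min_{uv\in E(s,s')} dist(s,u)+w_{uv}+dist(v,s')$) whose girth is approximated by Roditty--Tov in $\tilde{\cal O}(|S|^2)=\tilde{\cal O}(n^{4/3})$ time and whose cycles lift back to cycles of $G$ of no larger weight. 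A shortest cycle of $G$ meeting $q\ge 3$ trees maps to a cycle of $H_S$ of weight at most $(q+1)\,w(C_0)\le |S|\,w(C_0)=\tilde{\cal O}(n^{2/3})\,w(C_0)$; the exponent $2/3$ is precisely $|S|/n$ coming from the hitting set, which is the ingredient your approach is missing.
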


\begin{proof}
Let $S$ be as in Proposition~\ref{prop:good-set}.
We show a significantly more elaborate method that uses $S$ in order to approximate the girth.
We divide this method into five main steps.
\paragraph{Step 1: check the small balls.}
For every $v \notin S$, let $G'_v$ be the subgraph induced by the open ball $B_S(v)$.
As before, we first estimate the girth of $G'_v$.
Since this subgraph has order ${\cal O}(n^{1/3})$, by Theorem~\ref{thm:previous-approx} we can compute a constant-factor approximation for its girth in time $\tilde{\cal O}(n^{2/3})$ (say, a $2$-approximation).
Overall, this step takes total time $\tilde{\cal O}(n^{5/3})$.
Furthermore, after completing this step the following property (also used in Theorem~\ref{thm:main-bounded-weight}) becomes true:

\begin{myclaim}\label{claim:step-1}
Let $C_v$ be a shortest cycle passing through $v$.
If $w(C_v) < 2 \cdot dist(v,S)$ then, we computed a cycle of weight $\leq 2w(C_v)$.
\end{myclaim}

\begin{proofclaim}
This is trivial if $v \in S$.
Otherwise by the hypothesis, $V(C_v) \subseteq B_G(v,w(C_v)/2) \subseteq B_S(v) = V(G_v')$, and so, we outputted a cycle of weight $\leq 2w(C_v)$ for this subgraph.
\end{proofclaim}

We will use Property~\ref{claim:step-1} repeatedly throughout the next steps.
\paragraph{Step 2: partitioning into (shortest path) subtrees.}
Intuitively, what we try to do next is to approximate the weight of a shortest cycle passing close to $S$.
The difference with Theorem~\ref{thm:main-bounded-weight} is that we cannot use directly the algorithm of Roditty and Tov for that.
Indeed, their algorithm has some global steps ({\it e.g.}, the approximate computation of the girth of some sparse spanner) that we currently do not know how to do in subquadratic time.
So, we need to find some new techniques. 
Specifically, we partition the vertex-set $V$ into shortest-path subtrees $(T_s)_{s \in S}$ such that, for every $s \in S$ and $v \in V(T_s)$ we have $dist(v,s) = dist(v,S)$.
As noted, {\it e.g.}, in~\cite{ThZ05}, a simple way to do that is to add a dummy vertex $x_S \notin V$, edges $sx_S$ for every $s \in S$ with weight $0$, then to compute a shortest-path tree rooted at $x_S$ in time $\tilde{\cal O}(m)$.
See Fig.~\ref{fig:partitioning-step} for an example.
In what follows, we show how to use this tree structure in order to compute short cycles.

\begin{figure}[h!]
\begin{subfigure}{.45\textwidth}
\begin{center}
\begin{tikzpicture}

\draw (-2,0) -- (-1,1) -- (0,1) -- (1,1) -- (2,0) -- (1,-1) -- (0,-1) -- (-1,-1) -- (-2,0);
\draw (-1,-1) -- (-1,1) -- (0,-1) -- (1,1) -- (1,-1);
\draw (0,-1) -- (0,1);

\draw node[scale=.5,circle,fill] at (-2,0) {};
\draw[red] node[scale=1,rectangle,fill] at (-1,1) {};
\draw node[scale=.5,circle,fill] at (-1,-1) {};
\draw node[scale=.5,circle,fill] at (0,1) {};
\draw node[scale=.5,circle,fill] at (0,-1) {};
\draw[red] node[scale=1,rectangle,fill] at (1,1) {};
\draw node[scale=.5,circle,fill] at (1,-1) {};
\draw node[scale=.5,circle,fill] at (2,0) {};

\node at (-1.5,0.8) {$4$};
\node at (-1.5,-0.8) {$7$};
\node at (1.5,0.8) {$9$};
\node at (1.5,-0.8) {$3$};
\node at (-.5,1.3) {$3$};
\node at (.5,1.3) {$3$};
\node at (-.5,-1.3) {$8$};
\node at (.5,-1.3) {$3$};
\node at (-1.3,0) {$2$};
\node at (-.3,.5) {$7$};
\node at (1.3,0) {$7$};
\node at (-.6,-.4) {$9$};
\node at (.6,-.4) {$2$};

\end{tikzpicture}
\end{center}
\end{subfigure}\hfill
\begin{subfigure}{.45\textwidth}
\begin{center}
\begin{tikzpicture}

\draw[red,ultra thick,dashed] (-2,0) -- (-1,1) -- (0,1);
\draw[red,ultra thick,dashed] (-1,1) -- (-1,-1);
\draw[red,ultra thick,dashed] (1,1) -- (0,-1) -- (1,-1) -- (2,0);

\draw node[scale=.5,circle,fill] at (-2,0) {};
\draw[red] node[scale=1,rectangle,fill] at (-1,1) {};
\draw node[scale=.5,circle,fill] at (-1,-1) {};
\draw node[scale=.5,circle,fill] at (0,1) {};
\draw node[scale=.5,circle,fill] at (0,-1) {};
\draw[red] node[scale=1,rectangle,fill] at (1,1) {};
\draw node[scale=.5,circle,fill] at (1,-1) {};
\draw node[scale=.5,circle,fill] at (2,0) {};

\node at (-1.5,0.8) {$4$};
\node at (-.5,1.3) {$3$};
\node at (.5,-1.3) {$3$};
\node at (-1.3,0) {$2$};
\node at (.6,-.4) {$2$};
\node at (1.5,-0.8) {$3$};

\end{tikzpicture}
\end{center}
\end{subfigure}
\caption{An example of Step 2. The two vertices in $S$ are drawn as rectangles.}
\label{fig:partitioning-step}
\end{figure}
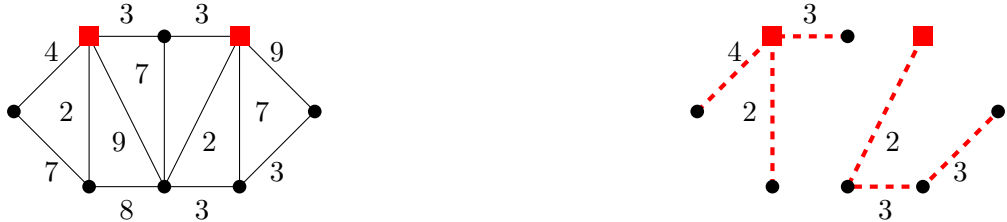

\paragraph{Step 3: finding short cycles in a subtree.}
Let $s \in S$ be fixed.
Informally, we try to estimate the weight of a shortest cycle in $V(T_s)$.
Note that every such a cycle has an edge that is not contained in $T_s$.
So, we consider all the edges $e = uv$ such that $u,v \in V(T_s)$ but $e \notin E(T_s)$.
Adding this edge in $T_s$ closes a cycle.
Let $C_{e,s}$ be an (unknown) shortest cycle passing by $e$ and contained in $V(T_s)$.
We output $dist(s,u) + w_e + dist(v,s)$ as a rough estimate of $w(C_{e,s})$.
Indeed, the latter is a straightforward upper-bound on $w(C_{e,s})$, and this bound is reached if $s \in \{u,v\}$.
Overall, this step takes total time ${\cal O}(m)$.

\begin{myclaim}\label{claim:step-3}
Let $C^*_s$ be a shortest cycle contained in $V(T_s)$.
After Steps 1-3, we computed a cycle of weight $\leq 2 w(C^*_s)$.
\end{myclaim}

\begin{proofclaim}
Let $e = uv \in E(C^*_s) \setminus E(T_s)$.
We also define $C_u,C_v$ shortest cycles passing by $u$ and $v$, respectively.
By Property~\ref{claim:step-1}, either we computed at Step 1 a cycle of weight $\leq 2 w(C_u) \leq 2 w(C^*_s)$ (a cycle of weight $\leq 2 w(C_v)) \leq 2 w(C^*_s)$, resp.), or we know for sure that $w(C^*_s) \geq w(C_u) \geq 2 dist(u,S) = 2 dist(u,s)$ ($w(C^*_s) \geq w(C_v) \geq 2 dist(v,S) = 2 dist(v,s)$, resp.). 
In the latter case, $dist(s,u) + w_e + dist(v,s)$ is a $2$-approximation of $w(C^*_s)$.
\end{proofclaim}

The proof for the next step is quite similar.
The approximation factor we get is slightly worse, but it is still a constant.

\paragraph{Step 4: finding short cycles in two subtrees.}
We now want to estimate the weight of a shortest cycle in $V(T_s) \cup V(T_{s'})$, for some distinct $s,s' \in S$.
There are three cases: either (a) such a cycle is fully contained in $V(T_s)$; or (b) it is fully contained in $V(T_{s'})$; or (c) it must contain two edges $e,e'$ with an end in $V(T_s)$ and the other end in $V(T_{s'})$.
Cases (a) and (b) have already been considered at Step 3.
So, we only consider Case (c), and we proceed as follows:
\begin{enumerate}
\item We scan all the edges $e = uv \in E$ such that $u$ and $v$ are not in a same subtree.
Let $s_u,s_v \in S$ such that $u \in V(T_{s_u}), v \in V(T_{s_v})$.
We set $\ell(e) = dist(s_u,u) + w_e + dist(v,s_v)$.
\item Group all these above edges with their two ends in the same two subtrees. 
It takes time ${\cal O}(m+|S|) = {\cal O}(m + n^{2/3})$ by using, say, a linear-time sorting algorithm.
\item Finally, for every distinct $s,s' \in S$, let $E(s,s')$ contain all the edges with one end in $T_s$ and the other end in $T_{s'}$.
If $|E(s,s')| \geq 2$ then, we pick $e,e'$ minimizing $\ell(\cdot)$ and we output $\ell(e)+\ell(e')$.
Overall, since the sets $E(s,s')$ partition the edges of $G$, this last phase also takes time ${\cal O}(m)$.
\end{enumerate}
To prove correctness of this step, let $s,s' \in S$ be distinct and fixed.
We prove the following result:

\begin{myclaim}\label{claim:step-4}
Let $C^*_{s,s'}$ be a shortest cycle contained in $V(T_s) \cup V(T_{s'})$.
After Steps 1-4, we computed a cycle of weight $\leq 3 w(C_{s,s'}^*)$.
\end{myclaim}

\begin{proofclaim}
If either $V(C^*_{s,s'}) \subseteq V(T_s)$ or $V(C^*_{s,s'}) \subseteq V(T_{s'})$ then, we are done by Property~\ref{claim:step-3}.
Otherwise, let $e = uv', e' = vu' \in E(s,s') \cap E(C^*_{s,s'})$ such that $u,v \in V(T_s)$ and $u',v' \in V(T_{s'})$.
Choosing the $uv$-path in $T_s$ and the $u'v'$-path in $T_{s'}$, one obtains that $w(C^*_{s,s'}) \leq \ell(e) + \ell(e')$.
Furthermore, let $C_u$ be a shortest cycle passing by $u$.
By Property~\ref{claim:step-1}, either we computed at Step 1 a cycle of weight $\leq 2 w(C_u) \leq 2 w(C^*_{s,s'})$, or we know for sure that $w(C^*_{s,s'}) \geq w(C_u) \geq 2 dist(u,S) = 2 dist(u,s)$.
We obtain similar results for $v,u',v'$.
As a result, and unless we found a better estimate of the girth during Step 1, we have $\ell(e) + \ell(e') \leq 2w(C^*_{s,s'}) + w_e + w_{e'} \leq 3 w(C^*_{s,s'})$. 
\end{proofclaim}

\paragraph{Step 5: the general case.}
We end up defining a weighted graph $H_S = (S, E_S, w^S)$, where:
$$ E_S = \{ ss' \mid E(s,s') \neq \emptyset \}, $$
and for every $ss' \in E_S$:
$$w^S_{ss'} = \min\limits_{e \in E(s,s')} \ell(e) = \min\limits_{uv \in E(s,s')} dist(s,u) +w_{uv} + dist(v,s').$$
We can construct $H_S$ simply by scanning all the sets $E(s,s')$ (computed during Step 4).
Overall, since the sets $E(s,s')$ partition the edges of $G$, this takes total time ${\cal O}(m + |S|) = {\cal O}(m + n^{2/3})$. 
Furthermore, by Theorem~\ref{thm:previous-approx} we can compute a constant-factor approximation of the girth of $H_S$ in time $\tilde{\cal O}(|S|^2) = \tilde{\cal O}(n^{4/3})$.

\smallskip
The graph $H_S$ is {\em not} a subgraph of $G$.
However, given a cycle $C_H$ for $H_S$, we can compute a cycle $C_H^*$ of $G$ as follows.
For every $s \in V(C_H)$ let $s',s'' \in V(C_H)$ be its two neighbours.
By construction, there exist $e = uv \in E(s',s)$ and $e' = xy \in E(s,s'')$ such that the edges $ss'$ and $ss''$ in $H_S$ have weights $dist(s',u) + w_e + dist(v,s)$ and $dist(s,x) + w_{e'} + dist(y,s'')$, respectively.
-- We may assume the edges $e,e'$ to be stored in $H_S$ so that $s',s''$ will choose the same common edge with $s$. --
Then, we replace $s$ by the $vx$-path in $T_s$.
It is important to notice that, by construction, we have $w(C_{H}^*) \leq w(C_H)$.
In particular, we can apply this above transformation to the (approximately shortest) cycle of $H_S$ that has been outputted by the algorithm of Roditty and Tov (Theorem~\ref{thm:previous-approx}). 

\medskip
Overall, let $C_{\min}$ be a shortest cycle computed by the algorithm above ({\it i.e.}, after Steps 1-5).
In order to finish the proof, we need to show that $w(C_{\min})$ is an $\tilde{\cal O}(n^{2/3})$-approximation of the girth of $G$.
By Properties~\ref{claim:step-3} and~\ref{claim:step-4}, this is the case if there exists a shortest cycle intersecting at most two subtrees $T_s, s \in S$.
From now on assume that any shortest cycle $C_0$ of $G$ intersects at least three subtrees $T_s$.
Write $C_0=(v_0,v_1,\ldots,v_{p-1},v_0)$ and assume w.l.o.g. $v_0,v_{p-1}$ are not contained into the same subtree $T_s$.
We partition the $v_i$'s into the maximal subpaths $P_0,P_1,\ldots,P_{q-1}, \ q \leq p$ that are contained into the vertex-set of a same subtree $T_s$ (in particular, $v_0 \in V(P_0)$ and $v_{p-1} \in V(P_{q-1})$).
Furthermore for every $j \in \{0,1,\ldots,q-1\}$ let $s_j \in S$ be such $V(P_j) \subseteq V(T_{s_j})$, and let $i_j$ be the largest index such that $v_{i_j} \in V(P_j)$.
For instance, $i_{q-1} = p-1$ by construction.
Since $P_0 = P_q$ and $q \geq 3$ by the hypothesis, there exist distinct indices $j_1,j_2$ such that $s_{j_1} = s_{j_2+1}$ and for every $j \in \{j_1,j_1+1,\ldots,j_{2}\}$ the $s_j$'s are pairwise different (indices are taken modulo $q$).
Then, two cases may arise:
\begin{itemize}
\item Case $j_2 = j_1 + 1$.
We have: 
$$e_{j_1} := v_{i_{j_1}}v_{i_{j_1}+1}, \ e_{j_2} := v_{i_{j_2}+1}v_{i_{j_2}}  \in E(s_{j_1},s_{j_2}).$$
Furthermore, $C_0$ is clearly a shortest cycle passing by $v_{i_{j_1}}$ (by $v_{i_{j_1}+1},v_{i_{j_2}+1},v_{i_{j_2}}$, respectively), and so, by Property~\ref{claim:step-1}, either we computed a short cycle of weight $\leq 2w(C_0)$ during Step 1, or we have $w(C_0) \geq 2 \cdot \max\{dist(s_{j_1},v_{i_{j_1}}),dist(s_{j_1},v_{i_{j_2}+1}),dist(s_{j_2},v_{i_{j_1}+1}),dist(s_{j_2},v_{i_{j_2}})\}$.
In the latter case, there exists a cycle of weight: 
$$\leq dist(s_{j_1},v_{i_{j_1}}) + w_{e_{j_1}} + dist(s_{j_2},v_{i_{j_1}+1}) + dist(s_{j_2},v_{i_{j_2}}) + w_{e_{j_2}} + dist(s_{j_1},v_{i_{j_2}+1}) \leq 3w(C_0)$$ that is fully contained in $V(T_{s_{j_1}}) \cup V(T_{s_{j_2}})$.
By Property~\ref{claim:step-4}, we so computed a cycle of weight $\leq 9 w(C_0)$ at Step 4.
\item From now on let us assume $j_2 \neq j_1 + 1$. For every $j$ we have $e_{j}:= v_{i_j}v_{i_j+1} \in E(s_j,s_{j+1})$, and so, the edge $s_js_{j+1} \in E_S$ has weight no more than $dist(s_j,v_{i_j}) + w_{e_j} + dist(v_{i_j+1},s_{j+1})$ in $H_S$ (indices are taken modulo $q$ for the $s_j$'s and modulo $p$ for the $v_i$'s).
Furthermore, $C_0$ is clearly a shortest cycle passing by $v_{i_j}$ (by $v_{i_j+1}$, respectively), and so, by Property~\ref{claim:step-1}, either we computed a short cycle of weight $\leq 2w(C_0)$ during Step 1, or we have $w(C_0) \geq 2 \cdot \max\{dist(s_j,v_{i_j}), dist(s_{j+1},v_{i_j+1}\}$ for every $j$.
In the latter case, $(s_{j_1},s_{j_1+1},\ldots,s_{j_2},s_{j_2+1}=s_{j_1})$ is a cycle in $H_S$ of weight: 
\begin{align*}
&\leq w(C_0) + \sum_{j=j_1}^{j_2} (dist(v_{i_{j-1}+1},s_j) + dist(s_j,v_{i_j})) \\ \\
&\leq w(C_0)(1+(j_2-j_1+1)) \\ \\
&\leq w(C_0)|S| \\ \\
&= \tilde{\cal O}(n^{2/3} \cdot w(C_0)).
\end{align*}
Then, let $C_H$ be a cycle of $H_S$ such that $w(C_H) = \tilde{\cal O}(n^{2/3} \cdot w(C_0))$ (obtained by applying the algorithm of Roditty and Tov to $H_S$).
As explained above, we can derive from $C_H$ a cycle $C_H^*$ of $G$ such that $w(C_H^*) \leq w(C_H) = \tilde{\cal O}(n^{2/3} \cdot w(C_0))$.
\end{itemize} 
Summarizing, we obtain an $\tilde{\cal O}(n^{2/3})$-approximation of the girth by outputting a shortest cycle computed during Steps 1,3,4,5.
\end{proof}

\subsection{Improving the approximation factor}

We can now conclude this section with its main result:

\mainUnbounded*

\begin{proof}
We may assume that all weights are positive by Lemma~\ref{remove-zero-weights}.
Let $g^*$ be the $\tilde{\cal O}(n^{2/3})$-approximation that we compute using Proposition~\ref{prop:linear-approx}.
There exists some (known) constant $c$ such that the girth of $G$ is somewhere between $g^*/(cn^{2/3}\log{n})$ and $g^*$.
Now, let $i_{\min},i_{\max}$ be the smallest nonnegative integers such that $g^*/(cn^{2/3}\log{n}) \leq (1+\varepsilon/2)^{i_{\min}}$ and in the same way $g^* \leq (1+\varepsilon/2)^{i_{\max}}$.
We have $i_{\max} - i_{\min} = {\cal O}\left(\log_{1+\varepsilon/2}{\left( \frac{g^*}{g^*/(cn^{2/3}\log{n})} \right)}\right) = {\cal O}(\log{n}/\log{(1+\varepsilon/2)}) = {\cal O}(\log{n}/\varepsilon)$. 

\smallskip
Let $S$ be as in Proposition~\ref{prop:good-set}.
For every $v \in V \setminus S$, we compute a $2$-approximation of a shortest cycle in $G'_v$: the subgraph of $G$ induced by the ball $B_S(v)$. -- In fact, this is already done in the proof of Proposition~\ref{prop:linear-approx}, but we restate it here for completeness of the method. --
By Theorem~\ref{thm:previous-approx}, it can be done in time $\tilde{\cal O}(n^{2/3})$ for each $v$, and so, this takes total time $\tilde{\cal O}(n^{5/3})$.

Then, let $T = \{ (1+\varepsilon/2)^i \mid i_{\min} \leq i \leq i_{\max} \}$.
For every $s \in S$, we compute the smallest $t \in T$ such that $HBD(G,s,t)$ detects a cycle (if any). 
It can be done in time $\tilde{\cal O}(|S|n\log{|T|}) = \tilde{\cal O}(n^{5/3}\log{1/\varepsilon})$ by using a dichotomic search.

\medskip
Let $g_{\min}$ be the value computed by the above algorithm (with a corresponding cycle).
We claim that the girth of $G$ is at least $g_{\min}/(2+\varepsilon)$, that will prove the theorem. 
For that, let us consider a shortest cycle $C_0$.
We need to consider several cases:
\begin{itemize}
\item Assume that $C_0$ passes by some $s \in S$.
Let $i_0$ be the smallest index such that $w(C_0) \leq (1+\varepsilon/2)^{i_0}$.
Since we have $g^*/(cn^{2/3}\log{n}) \leq w(C_0) \leq g^*$, $i_0 \in \{i_{\min},i_{\min}+1,\ldots,i_{\max}\}$.
Furthermore, by Corollary~\ref{cor:stop-hbd-1} applied for $dist(s,C_0) = 0$, $HBD(G,s,(1+\varepsilon/2)^{i_0})$ detects a cycle.
By Lemma~\ref{lem:stop-output}, we so deduce that $g_{\min} \leq 2(1+\varepsilon/2)^{i_0} \leq 2(1+\varepsilon/2)w(C_0) \leq (2+\varepsilon)w(C_0)$.
\item Otherwise, let $v \in V \setminus S$ be contained into $C_0$.
We may assume that $V(C_0) \not\subseteq B_S(v)$, since otherwise $C_0$ is a cycle in $G_v'$, and so, we can already conclude that $g_{\min} \leq 2w(C_0)$.
In particular, we have: $$\max_{u \in V(C_0)} dist_{C_0}(u,v) \geq \max_{u \in V(C_0)} dist_{G}(u,v) \geq dist_G(v,S) > 0.$$
Let $s \in S$ be such that $dist_G(v,s) = dist_G(v,S)$.
By Corollary~\ref{cor:stop-hbd-2}, this implies that the smallest $t_s$ such that $HBD(G,s,t_s)$ detects a cycle satisfies $t_s \leq w(C_0)$.
Therefore, we can choose as above the smallest index $i_0$ such that $w(C_0) \leq (1+\varepsilon/2)^{i_0}$.
As already noticed $i_0 \in \{i_{\min},i_{\min}+1,\ldots,i_{\max}\}$, and by Corollary~\ref{cor:stop-hbd-2} we know that $HBD(G,s,(1+\varepsilon/2)^{i_0})$ detects a cycle.
In this situation, we can conclude by Lemma~\ref{lem:stop-output} that $g_{\min} \leq 2(1+\varepsilon/2)^{i_0} \leq 2(1+\varepsilon/2)w(C_0) \leq (2+\varepsilon)w(C_0)$.  
\end{itemize}
Overall, the above proves as claimed $g_{\min} \leq (2+\varepsilon)w(C_0)$.
\end{proof}

\section{A subquadratic algorithm for dense graphs}\label{sec:remove-edges}

A drawback of the algorithms in Theorems~\ref{thm:main-bounded-weight} and~\ref{thm:main-unbounded-weight} is that their time complexity also depends on the number $m$ of edges.
It implies that for dense graphs with $m = \Theta(n^2)$ edges we do not achieve any improvement on the running time compared to~\cite{LiL09,RoT13}.
Assuming sorted adjacency lists, we now prove that the dependency on $m$ can always be discarded:

\mainWithoutM

Theorem~\ref{thm:main-without-m} is obtained from a natural modification of the algorithm that we presented in Section~\ref{sec:main}.
We postpone its proof to Section~\ref{sec:subquadratic-in-n-proof}.
Section~\ref{sec:unweighted-density} is devoted to a different approach for the problem: that combines the result of Theorem~\ref{thm:main-unbounded-weight} with a well-known density result for {\em unweighted} graphs in order to derive either a {\em randomized} $4$-approximation, or a deterministic $c$-approximation for some constant $c > 4$.

\subsection{From weighted to unweighted graphs and back}\label{sec:unweighted-density}

We first prove the following result:

\begin{proposition}\label{prop:unweighted}
For every $G=(V,E,w)$ with non negative edge-weights and sorted adjacency lists, we can compute:
\begin{enumerate}
\item a randomized $4$-approximation for {\sc Girth} in expected time $\tilde{\cal O}(n^{5/3})$;
\item and, for every $\varepsilon > 0$, a deterministic $(8+\varepsilon)$-approximation for {\sc Girth} in time $\tilde{\cal O}(n^{5/3}\polylog{1/\varepsilon})$.
\end{enumerate}
\end{proposition}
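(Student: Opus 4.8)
The plan is to reduce to a sparse instance by keeping only the lightest edges, run the algorithm of Theorem~\ref{thm:main-unbounded-weight} on it, and use the classical density bound for $C_4$-free graphs to control the loss caused by this sparsification. Concretely, set $k := \lceil 2n^{3/2} \rceil$ and let $H^*$ be the subgraph of $G$ consisting of the $\min(m,k)$ edges of smallest weight (ties broken arbitrarily); let $\rho$ denote the largest weight of an edge of $H^*$. Using the sorted adjacency lists, I would build $H^*$ by an $n$-way merge: keep a min-heap holding, for each vertex, a pointer into its sorted list, repeatedly extract the globally lightest half-edge and advance the corresponding pointer, discard the second occurrence of each edge, and stop once $k$ distinct edges (or all of $E$) have been collected. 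This costs $\tilde{\cal O}(k)=\tilde{\cal O}(n^{3/2})$; from the collected edge set I then rebuild sorted and ordered adjacency lists for $H^*$ in $\tilde{\cal O}(n^{3/2})$ more, and if $H^*$ is disconnected I run the subsequent algorithm on each connected component, which is harmless since $\sum_i |V_i|^{5/3}\le n^{5/3}$ and $\sum_i |E(H_i^*)| = |E(H^*)|$.

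The heart of the argument is the inequality $\mathrm{girth}(G)\le \mathrm{girth}(H^*)\le 4\,\mathrm{girth}(G)$, together with the statement that whenever $m>k$ \emph{every} $4$-cycle of $H^*$ has weight at most $4\,\mathrm{girth}(G)$. The left inequality is immediate since $H^*\subseteq G$. For the right one, fix a shortest cycle $C_0$ of $G$ and write $g^*=w(C_0)$. If $C_0\subseteq H^*$ then $\mathrm{girth}(H^*)=g^*$ and we are done. Otherwise $H^*\neq G$, so $|E(H^*)|=k$, which exceeds the maximum number of edges of a $C_4$-free $n$-vertex graph~\cite{Bro66}; hence $H^*$ contains a $4$-cycle. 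Moreover, since $C_0\not\subseteq H^*$ some edge of $C_0$ is not among the $k$ lightest, so it has weight $\ge\rho$, whence $\rho\le w(C_0)=g^*$; consequently every $4$-cycle of $H^*$ (all of whose four edges have weight $\le\rho$) has weight $\le 4\rho\le 4g^*$, which also gives $\mathrm{girth}(H^*)\le 4g^*$.

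For the deterministic $(8+\varepsilon)$-approximation I would run Theorem~\ref{thm:main-unbounded-weight} on $H^*$ with error parameter $\varepsilon/4$; in time $\tilde{\cal O}(n^{5/3}\polylog{(1/\varepsilon)}+|E(H^*)|)=\tilde{\cal O}(n^{5/3}\polylog{(1/\varepsilon)})$ this returns a cycle of $H^*\subseteq G$ of weight between $\mathrm{girth}(H^*)\ge\mathrm{girth}(G)$ and $(2+\varepsilon/4)\,\mathrm{girth}(H^*)\le 4(2+\varepsilon/4)\,\mathrm{girth}(G)\le(8+\varepsilon)\,\mathrm{girth}(G)$. For the randomized $4$-approximation: if $m\le k$ then $H^*=G$ and Theorem~\ref{thm:main-unbounded-weight} run on $G$ with $\varepsilon=1$ already gives a $3$-approximation in $\tilde{\cal O}(n^{5/3})$; if $m>k$ I additionally compute a $4$-cycle of $H^*$ and output whichever of the two cycles is lighter. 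To find a $4$-cycle I sample a uniform random cherry $(a,v,b)$ of $H^*$ (choose $v$ with probability proportional to $\binom{d_{H^*}(v)}{2}$, then two distinct random neighbours $a,b$) and test, in $\tilde{\cal O}(n)$ time by intersecting the two ordered neighbourhoods, whether $a$ and $b$ have a common neighbour $v'\neq v$; as soon as one does, $(a,v,b,v')$ is the desired $4$-cycle. Since $|E(H^*)|=k>\mathrm{ex}(n,C_4)$, convexity gives $\sum_v\binom{d_{H^*}(v)}{2}\ge(8-o(1))n^2$, which far exceeds the $\le\binom n2$ cherries whose endpoint pair has exactly one common neighbour; hence a uniformly random cherry has two common endpoints with probability bounded below by a constant, the expected number of samples is $O(1)$, and the whole search runs in expected time $\tilde{\cal O}(n)$ and always succeeds. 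Correctness is exactly the dichotomy above: the output is a cycle of $G$, hence of weight $\ge\mathrm{girth}(G)$; if a shortest cycle of $G$ survives in $H^*$ then the cycle produced by Theorem~\ref{thm:main-unbounded-weight} has weight $\le 3\,\mathrm{girth}(G)$, and otherwise the $4$-cycle found has weight $\le 4\rho\le 4\,\mathrm{girth}(G)$, so in both cases the minimum of the two is a $4$-approximation; the total expected time is $\tilde{\cal O}(n^{5/3})$.

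I expect the delicate point to be the key inequality of the second paragraph — specifically the bookkeeping around the threshold $\rho$, showing that in the relevant case $\rho\le\mathrm{girth}(G)$ and hence that an \emph{arbitrary} $4$-cycle of $H^*$ (not one we must carefully select) already certifies a $4$-approximation — together with checking that the cherry-sampling routine genuinely achieves $\tilde{\cal O}(n)$ expected time irrespective of the degree sequence of $H^*$, and that passing to connected components of $H^*$ does not break the running-time or approximation guarantees inherited from Theorem~\ref{thm:main-unbounded-weight}.
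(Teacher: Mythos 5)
Your proposal is correct and takes essentially the same route as the paper's proof: keep the $\Theta(n^{3/2})$ lightest edges, run Theorem~\ref{thm:main-unbounded-weight} on that subgraph with error parameter $\varepsilon/4$, and use the $C_4$-density bound to argue that when no shortest cycle of $G$ survives, the girth of $G$ is at least the threshold weight $\rho$ while the sparse subgraph contains a four-edge cycle of weight at most $4\rho$. The only divergence is in the randomized $4$-approximation, where the paper simply invokes the expected-linear-time $C_4$-detection of Yuster and Zwick, whereas you substitute a self-contained cherry-sampling routine; your counting (at least $(8-o(1))n^2$ cherries against at most $\binom{n}{2}$ whose endpoint pair has a unique common neighbour) is sound and gives the same expected-time guarantee.
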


Informally we will use Proposition~\ref{prop:unweighted} as a replacement for Proposition~\ref{prop:linear-approx} in the proof of Theorem~\ref{thm:main-without-m}.
We think that our approach in this section could also be of independent interest.
Our result combines Theorem~\ref{thm:main-unbounded-weight} with the following well-known result in graph theory:

\begin{theorem}[~\cite{Bro66}]\label{thm:c4-threshold}
Every {\em unweighted} graph with order $n$ and $m \geq \left( \frac 1 2 + o(1)\right) n^{3/2}$ edges contains a cycle of length four.
\end{theorem}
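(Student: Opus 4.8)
The plan is to prove the contrapositive by the classical double-counting (``cherry-counting'') argument: I will show that every $C_4$-free graph on $n$ vertices has at most $\tfrac12 n^{3/2} + \tfrac14 n$ edges, which is $\left(\tfrac12 + o(1)\right)n^{3/2}$; hence any graph with at least $\left(\tfrac12 + o(1)\right)n^{3/2}$ edges must contain a $C_4$.

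First I would fix the counting object. Call a \emph{cherry} an (unordered) pair of edges sharing a common endpoint, equivalently a path $x - v - y$ on three vertices, and count cherries in two ways. Grouping cherries by their center $v$, their number equals $\sum_{v \in V}\binom{d(v)}{2}$, where $d(v)$ is the degree of $v$. On the other hand, each cherry $x - v - y$ exhibits the pair $\{x,y\}$ as a pair of vertices with a common neighbour; since $G$ contains no cycle of length four, no two vertices have two common neighbours, so the map sending a cherry to its pair of endpoints is injective. Therefore $\sum_{v \in V}\binom{d(v)}{2} \le \binom{n}{2}$.

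Next I would lower-bound the left-hand side in terms of $m$ using convexity of $t \mapsto \binom{t}{2}$ (equivalently, Cauchy--Schwarz): since $\sum_{v} d(v) = 2m$,
\[
\sum_{v \in V}\binom{d(v)}{2} = \frac12\left(\sum_{v} d(v)^2 - \sum_{v} d(v)\right) \ge \frac12\left(\frac{(2m)^2}{n} - 2m\right) = \frac{2m^2}{n} - m .
\]
Combining with the bound from the previous paragraph gives $\frac{2m^2}{n} - m \le \frac{n(n-1)}{2}$, i.e. $4m^2 - 2mn - n^2(n-1) \le 0$. Solving this quadratic inequality in $m$ yields $m \le \frac{n + n\sqrt{4n-3}}{4} < \frac{n + 2n^{3/2}}{4} = \frac12 n^{3/2} + \frac14 n$. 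Writing $\frac12 n^{3/2} + \frac14 n = \left(\frac12 + \frac{1}{4\sqrt n}\right)n^{3/2}$ and noting $\frac{1}{4\sqrt n} = o(1)$, I conclude that a $C_4$-free graph has fewer than $\left(\frac12 + o(1)\right)n^{3/2}$ edges, which is exactly the contrapositive of the statement.

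Since every step is an elementary identity or a one-line inequality, I do not anticipate a genuine obstacle; the only points requiring a little care are getting the direction of the cherry double count right (it is $C_4$-freeness that caps the number of cherries by $\binom n2$, not the converse) and checking that the closed-form quadratic bound indeed absorbs into the claimed $\left(\frac12 + o(1)\right)n^{3/2}$, with the explicit admissible choice $o(1) = \frac{1}{4\sqrt n}$.
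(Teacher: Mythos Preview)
The paper does not supply its own proof of this statement; it is quoted as a classical extremal result (attributed to Brown) and used as a black box in the proof of Proposition~\ref{prop:unweighted}. Your cherry-counting argument is the standard proof---essentially the K\H{o}v\'ari--S\'os--Tur\'an bound for $K_{2,2}$---and it is correct, including the explicit identification of an admissible $o(1)$ term as $\tfrac{1}{4\sqrt n}$.
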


\begin{algorithm}[!ht]
        \texttt{Controlled-Density-Approx}$(G)$
       
         \vspace{1mm}

        {\bf Assumption:} the adjacency lists $(Q_v)_{v \in V}$ are sorted.
        \vspace{1mm}
        \begin{algorithmic}[1]\small
            \STATE $H = (V,\emptyset,w), w_{\max} \gets 0$
            \STATE
            \STATE {\it /* Initialization */}
            \STATE $Q \gets \{\}$
            \FORALL{$v \in V$}
            \STATE $uv \gets {\tt Extract-min}(Q_v)$
            \STATE $Q \gets Q \cup \{uv\}$
            \ENDFOR
            \STATE
            \STATE {\it /* Construction of the subgraph */}
            \WHILE{$Q \neq \emptyset$ {\bf and} $|E(H)| < \left( \frac 1 2 + o(1)\right) n^{3/2}$}
            \STATE $uv \gets {\tt Extract-min}(Q)$
            \STATE $E(H) \gets E(H) \cup \{uv\}$; $w_{\max} \gets w_{uv}$
            \STATE $uv' \gets {\tt Extract-min}(Q_u)$; $u'v \gets {\tt Extract-min}(Q_v)$
            \STATE $Q \gets Q \cup \{uv', u'v \}$
            \ENDWHILE
            \STATE
            \STATE Let $C$ be s.t. $w(C)$ is a $(2 + \varepsilon/4)$-approximation of the girth of $H$.
            \RETURN $C$.
        \end{algorithmic}
\end{algorithm}

\begin{proofof}{Proposition~\ref{prop:unweighted}}
If $G$ has $m = \tilde{\cal O}(n^{5/3})$ edges then, we can simply apply Theorem~\ref{thm:main-unbounded-weight} for $\varepsilon = 2$ (the latter can be easily verified by scanning the adjacency lists until we read the end of it or we reach the desired upper-bound).
From now on assume this is not the case and let $H$ be induced by the $\left\lceil\left( \frac 1 2 + o(1)\right) n^{3/2}\right\rceil$ edges of minimum weight in $G$.

We claim that $H$ can be constructed in time $\tilde{\cal O}(n^{3/2})$ by using a priority queue $Q$.
Indeed, initially we set $E(H) = \emptyset$ and for every $v \in V$ we start inserting in $Q$ the edge of minimum-weight that is incident to $v$.
This way, we ensure that a minimum-weight edge of $G \setminus E(H)$ is present in $Q$ (recall that initially, $E(H) = \emptyset$, and so, $G = G \setminus E(H)$).
Then, in order to preserve this above invariant, each time a minimum-weight edge $uv$ is extracted from $Q$ and added in $H$ we insert in $Q$ the remaining edge of minimum weight in $Q_u$ and the one in $Q_v$ (if any).
-- Note that in doing so, a same edge can be added in $Q$ twice, but this has no consequence on the algorithm. --

We now apply Theorem~\ref{thm:main-unbounded-weight} for $\varepsilon' = \varepsilon/4$ to $H$, and we so obtain a cycle $C$ that is a $(2 + \varepsilon/4)$-approximation of the girth of $H$.
We claim that $w(C)$ is also a $(8+\varepsilon)$-approximation of the girth of $G$.
In order to prove this claim, we need to consider two different cases:
\begin{itemize}
\item Assume there exists a shortest cycle $C_0$ of $G$ such that $E(C_0) \subseteq E(H)$.
By Theorem~\ref{thm:main-unbounded-weight}, $w(C) \leq (2+\varepsilon/4)w(C_0) < (8+\varepsilon)w(C_0)$.
\item Otherwise, any shortest cycle $C_0$ of $G$ has at least one edge that is not contained in $H$.
Since edges are added by increasing weights, this implies that every such a shortest cycle contains an edge of weight at least $w_{\max}$.
In particular, the girth of $G$ is at least $w_{\max}$, where $w_{\max}$ denotes the maximum-weight of an edge in $H$.
Furthermore, since $H$ has enough edges by construction, by Theorem~\ref{thm:c4-threshold} it contains a cycle of four vertices; the latter has weight at most $4w_{\max}$.
As a result, $w(C) \leq (2+\varepsilon/4) \cdot 4w_{\max} = (8 + \varepsilon)w_{\max} \leq (8+\varepsilon)w(C_0)$.
\end{itemize}
The above proves the claim, and so, the deterministic version of the result.
In order to obtain a randomized $4$-approximation, it suffices to pick $\varepsilon \leq 2$ and to output any cycle $C'$ of $H$ with four vertices (then, we output any of $C,C'$ that has minimum weight).
Up to some constant multiplicative increase of the number of edges to add in $H$, this can be done by using a randomized algorithm of Yuster and Zwick that runs in expected linear time~\cite[Theorem $2.9$]{YuZ97}. 
Note that this is the only source of randomness in the algorithm.
\end{proofof}

\subsection{Derandomization}\label{sec:subquadratic-in-n-proof}

We end up derandomizing the result of Proposition~\ref{prop:unweighted}.

\begin{proofof}{Theorem~\ref{thm:main-without-m}}
It suffices to prove the result for the graphs with non negative edge-weights and $\varepsilon >0$ arbitrary.
To see that, let us consider any graph $G$ with all edge-weights in $\{1,2,\ldots,M\}$.
If we take $G$ as the input of our deterministic algorithm for the graphs with non negative edge-weights, setting $\varepsilon = \frac 1 {M(n+1)}$, then, the cycle $C$ of $G$ that is outputted by the algorithm has total integer weight strictly upper-bounded by $1/\varepsilon$.
In particular, $w(C)$ is in fact a $4$-approximation of the girth of $G$.
Thus, from now on we will only consider the more general case of graphs $G$ with non negative edge-weights.

\smallskip
Let $\varepsilon > 0$ be fixed.
The randomized version of the theorem was already proved in Proposition~\ref{prop:unweighted}.
For proving the deterministic version, we start computing a $c$-approximation $g^*$ of the girth of $G$, for some universal constant $c > 4$.
By Proposition~\ref{prop:unweighted}, it can be done in time $\tilde{\cal O}(n^{5/3})$.
Observe that the girth of $G$ is somewhere between $g^*/c$ and $g^*$.
Now, define $\eta = \varepsilon/6$.
Let $i_{\min},i_{\max}$ be the smallest nonnegative integers such that $g^*/c \leq (1+\eta)^{i_{\min}}$ and in the same way $5g^* \leq (1+\eta)^{i_{\max}}$.
We have $i_{\max} - i_{\min} = {\cal O}(1/\log{(1+\eta)}) = {\cal O}(1/\varepsilon)$. 
Furthermore, let $T = \{ (1+\eta)^i \mid i_{\min} \leq i \leq i_{\max} \}$.

The remaining of the algorithm is essentially the same as for Theorem~\ref{thm:main-unbounded-weight}, except that we avoid the costly computation of the induced subgraphs $G'_v$.
Specifically, our processing of the vertices in the set $S$ (given by Proposition~\ref{prop:good-set}) remains unchanged.
However, for every $v \in V \setminus S$, we compute the smallest $t \in T$ such that $t < dist_G(v,S)$ and $HBD(G,s,t)$ detects a cycle (if any).
In doing so, we can only visit the vertices in $B_S(v)$, and so, the total running time for processing $v$ is upper-bounded by $\tilde{\cal O}(|B_S(v)|\polylog{M})$, that is in $\tilde{\cal O}(n^{1/3}\polylog{M})$.
Overall, this algorithm runs in total time $\tilde{\cal O}(n^{5/3}\log{T})$, that is in $\tilde{\cal O}(n^{5/3}\log{1/\varepsilon})$. 

Let $C \in \{ C_v \mid v \in V \}$ be of minimum weight amongst all the cycles computed by the algorithm.
We claim that $w(C)$ is a $(4+\varepsilon)$-approximation of the girth of $G$.
In order to prove this claim, let $C_0$ be a shortest cycle of $G$.
There are two cases.
\begin{itemize}
\item Assume there exists a vertex $s \in V(C_0) \cap S$.
Then, $w(C) \leq w(C_s) \leq 2(1+\eta) w(C_0) < (4+\varepsilon)w(C_0)$ (this is a similar proof as for Theorem~\ref{thm:main-unbounded-weight}).
\item Thus, from now on we assume $V(C_0) \cap S = \emptyset$.
Let $v \in V(C_0)$ be arbitrary.
\begin{itemize}
\item Assume first $w(C_0) < dist_G(v,S)/(1+\eta)$.
By Corollary~\ref{cor:stop-hbd-1} applied for $dist_G(v,C_0) = 0$, the smallest $t_v$ such that {\tt HBD}$(G,v,t_v)$ detects a cycle satisfies $t_v \leq w(C_0)$.
In particular, the smallest $t_v' \in T$ such that {\tt HBD}$(G,v,t_v')$ detects a cycle satisfies $t_v' \leq (1+\eta)w(C_0) < dist_G(v,S)$.
By Lemma~\ref{lem:stop-output}, $w(C) \leq w(C_v) \leq 2(1+\eta)w(C_0) < (4+\varepsilon)w(C_0)$.
\item Otherwise $w(C_0) \geq dist_G(v,S)/(1+\eta)$.
Let $s \in S$ minimize $dist_G(s,v)$.
Then, by Corollary~\ref{cor:stop-hbd-1}, the smallest $t_s$ such that ${\tt HBD}(G,s,t_s)$ detects a cycle satisfies: $$t_s \leq dist_G(s,v) + w(C_0) = dist_G(v,S) + w(C_0) \leq (2+\eta)w(C_0).$$
In particular, the smallest $t_s' \in T$ such that ${\tt HBD}(G,s,t_s')$ detects a cycle satisfies: $$t_s' \leq (1+\eta)t_s \leq (2+2\eta + \eta^2)w(C_0) \leq (2+3\eta)w(C_0).$$
As a result, by Lemma~\ref{lem:stop-output} $w(C) \leq w(C_s) \leq (4+6\eta)w(C_0) \leq (4+\varepsilon)w(C_0)$.
\end{itemize}
\end{itemize}
Summarizing, $w(C) \leq (4+\varepsilon)w(C_0)$ in all the cases.
\end{proofof}

We leave open whether a better approximation-factor than $4$ can be obtained in $o(n^2)$-time.

\section{Open problems}\label{sec:ccl}

The most pressing question is whether we can achieve a $4/3$-approximation for the girth in subquadratic time.
If it is not the case then, what is the best approximation factor we can get in subquadratic time?
We note that in~\cite{RoV12}, Roditty and Vassilevska Williams conjectured that we cannot achieve a $(2-\varepsilon)$-approximation already for unweighted graphs.
If true then, this would imply our algorithm is essentially optimal (at least for the non dense graphs with ${\cal O}(n^{2-\varepsilon})$ edges).

\bibliographystyle{abbrv}
\bibliography{bib-girth}

\end{document}